\theoremstyle{plain}
\newtheorem{thm}{Theorem}
\newtheorem{lem}[thm]{Lemma}
\newtheorem{prop}[thm]{Proposition}
\newtheorem{cor}[thm]{Corollary}
\theoremstyle{definition}
\newtheorem{appx}[thm]{Approximation}
\theoremstyle{remark}
\newtheorem{rem}{Remark}
\newcommand{\eqn}[1]{\begin{align}#1\end{align}}
\newcommand{\EQ}[1]{\begin{equation*}#1\end{equation*}}
\newcommand{\EQN}[1]{\begin{equation}#1\end{equation}}
\newcommand{\meq}[2]{\begin{xalignat*}{#1}#2\end{xalignat*}}
\newcommand{\ieeeproof}[1]{\begin{IEEEproof}#1\end{IEEEproof}}
\newcommand{\set}[1]{\left\{#1\right\}}
\newcommand{\SetIn}[1]{\mathds{1}_{\set{#1}}}
\newcommand{\R}{\mathbb{R}}
\newcommand{\E}{\mathbb{E}}
\newcommand{\Z}{\mathbb{Z}}
\newcommand{\bmu}{\bar{\mu}}
\newcommand{\bN}{\bar{N}}
\newcommand{\bY}{\bar{Y}}
\newcommand{\sY}{\mathscr{Y}}
\newcommand{\sZ}{\mathscr{Z}}
\newcommand{\bpi}{\ensuremath{\boldsymbol{\pi}}}
\renewcommand{\ge}{\geqslant}
\renewcommand{\le}{\leqslant}
\newcommand{\iid}{\emph{i.i.d.~}}
\newcommand{\nix}[1]{}
\title{Modeling Performance and Energy trade-offs in Online Data-Intensive Applications}
\author{
Ajay Badita\IEEEauthorrefmark{1}%
\and Rooji Jinan\IEEEauthorrefmark{1}%
\and Balajee Vamanan\IEEEauthorrefmark{2}%
\and Parimal Parag\IEEEauthorrefmark{1}
\thanks{
The authors\IEEEauthorrefmark{1} are with Indian Institute of Science, Bangalore, KA 560012, India. 
Email: \IEEEauthorrefmark{1}\{ajaybadita,roojijinan,parimal\}@iisc.ac.in.
}
\thanks{
Author\IEEEauthorrefmark{2} is with the department of CS, University of Illinois, Chicago, IL 60607, US. Email: \IEEEauthorrefmark{2}\{bvamanan\}@uic.edu.
}
}
\begin{document}
\maketitle
\begin{abstract}
We consider energy minimization for data-intensive applications run on large number of servers, 
for given performance guarantees.  
We consider a system, 
where each incoming application is sent to a set of servers, 
and is considered to be completed if a subset of them finish serving it. 
We consider a simple case when each server core has two speed levels,  
where the higher speed can be achieved by higher power for each core independently.
The core selects one of the two speeds probabilistically for each incoming application request. 
We model arrival of application requests by a Poisson process, 
and random service time at the server with independent exponential random variables.
Our model and analysis generalizes to today's state-of-the-art in CPU energy management where each core can independently select a speed level from a set of supported speeds and corresponding voltages.  
The performance metrics under consideration are the mean number of applications in the system and the average energy expenditure. 
We first provide a tight approximation to study this previously intractable problem {and derive closed form approximate expressions for the performance metrics when service times are exponentially distributed.} 
Next, we study the trade-off between the approximate mean number of applications and energy expenditure in terms of the switching probability.
{We demonstrate that the numerically obtained curves are closely approximated by the expressions derived for the performance metrics. 
}
\end{abstract}

\section{Introduction} 
\label{sec:intro}

In today's world of ubiquitous Internet connectivity and big data, 
we heavily rely on services such as Web search, social networks,
and e-commerce for curated access to vast amounts of data. 
Formally referred to as \textit{Online, Data-Intensive (OLDI)} applications, 
these applications are hosted in large datacenters, they consume 
hundreds of megawatts of power, and they incur millions of dollars in annual 
operating expenses~\cite{BarrosoSLCA2013, partha}. 
The problem of minimizing the energy consumption of 
OLDI applications in datacenters continues to garner significant interest 
in the computer systems research community~\cite{adrenaline, rubik, dirigent}.

Modern datacenter designs spend only about 10\% of their overall energy 
in non-computing needs such as cooling and power conversion, so  
most of the energy is spent directly in computing needs~\cite{google_pui}.
Among compute components, CPUs consume the largest 
fraction of the overall energy (see Figure 1.6 from~\cite{BarrosoSLCA2013}). 
Therefore, the focus of our work as well as prior works in the community 
is on achieving energy efficiency of CPUs in datacenters that host OLDI applications.

OLDI applications differ significantly from traditional  
applications in terms of their software architecture and performance needs. 
Unlike traditional applications, OLDI applications deal with extremely large datasets.
For instance, a Web search query must search the Web index, which has a size of 
about 1 exabyte~\cite{index}.
Owing to their large datasets, OLDI applications distribute data to a large 
number of servers and each query must lookup a large number of servers. 
Therefore, datacenters hosting OLDI applications \textit{cannot} 
shutdown a fraction of the servers, as is commonly done with traditional applications. 
Further, unlike traditional applications, OLDI applications are typically user-facing and 
require stringent service-level agreements on latency. 
Therefore, OLDI applications are \textit{not} amenable to batching.

Each OLDI query is sent to all servers hosting the relevant data, 
and the overall response is aggregated from results delivered by individual servers. 
The overall response time of an OLDI query has strict deadlines (e.g., 200 ms)~\cite{VamananACM2015}, 
and is sensitive to servers that finish last. 
Slow servers that delay the overall query response are called \emph{stragglers}. 
To prevent stragglers from degrading the response time, 
OLDI applications typically trade-off accuracy for speed by dropping responses from servers that arrive after the deadline. 
Nevertheless, to provide high accuracy, deadlines are set such that only a \textit{very small} fraction 
(e.g., 1\% or 0.1\%) of responses  are dropped.  
Therefore, the response time is sensitive to higher percentile completion times of individual servers (e.g., $99$-th percentile)~\cite{AlizadehACM2010, VamananACM2015}. 


Instead of a strict deadline, we can also consider a threshold on the number of responses. 
An OLDI query can mitigate the impact of stragglers, 
by collecting responses from first $k$ out of $n$ servers to which the query was sent. 
We note that $100\times k/n$ indicates the percentile completion time of all servers. 
If the query accepts larger number of responses, then the results would be more accurate. 
However, this accuracy comes at the cost of increased overall response time. 
Thus, the threshold $k$ on the number of responses, controls the 
trade-off between speed and accuracy. 
The trade-off between speed and accuracy is determined by service providers 
based on the (predicted) revenue loss 
due to slower responses (i.e., $k$ is high) versus 
inaccurate results (i.e., $k$ is low).
Once set by providers, this trade-off becomes part of service-level agreement and 
operators (energy management) cannot violate it. 
Thus, our goal is to optimize energy savings while honoring this trade-off. 
We also remark that if the response time of individual servers are assumed to be independent and identically distributed (\emph{i.i.d.}), 
then the overall response time for a single query is $k$th order statistics of $n$ response times. 
When the number of servers $n$ is large, this can converge to a deterministic overall response time, 
and hence this threshold $k$ can be chosen to ensure the overall response time meets the strict deadline.

Throughout this work, we consider the case when all queries are sent to a fixed number of server cores $n$, 
and the query is considered completed when $k$ of them respond. 
Individual cores may be serving the previously received queries, 
and hence the incoming queries are queued at the cores. 
This is precisely the setting of $(n,k)$ fork-join queues~\cite{Dimakis2010TIT,Xiang2016TNET,Shah2016TCOM}, 
where each OLDI query is forked to $n$ cores, 
and $k$ of them are joined to get the aggregate response. 
Since the remaining requests corresponding to this query at $(n-k)$ cores are no longer needed, 
they are canceled immediately.

%
%

We define the response time to be a high percentile of individual core completion times. 
That is, we consider the overall response time to be the $k$th response among all $n$ cores, 
where $k/n$ is large. 
In this work, we mathematically model and analyze the trade-offs between energy and 
performance (i.e., power and response time) in OLDI applications. 
In particular, we find the minimum energy requirement to meet a mean response time guarantee. 

We assume a finite number of processor power states (e.g., FID-VID pairs)~\cite{RotemMICRO2012}.
While we analyze for two processor states, our analysis can be generalized to a finite number of 
states. 
We consider the scenario, where each application gets served in one of the two rates depending on a probabilistic policy. 
That is, we suppose that we can choose to run each core at high or low speed randomly for each query independent of other cores.
{We observe that our proposed scheme requires the servers to switch the power state quite often.
However, our policy is practically feasible as systems researchers and industry
have greatly reduced the switching cost (i.e., overheads). 
For instance, Intel's Running Average Power Limit (RAPL) enables switching the power
state of each CPU core in less than 1 ms and is supported by modern CPUs from Intel.
}

It turns out that under our modeling assumptions, the system evolution can be modeled as a Markov process. 
We first observe that the Markov process under consideration can be thought of as a sequence of virtual queues in series, 
which pool their servers when idling. 
For a single process power state, this tandem queue has been studied in~\cite{Parag2017Infocom, Badita2019TIT}, 
where the system state is a sequence of the number of applications that have been served by $i \le k$ servers. 
Due to multiple processor power states and probabilistic slowdown, 
the resulting Markov process for our model is more complex. 
Specifically, the system state also includes the number of high rate servers at each virtual queue in the series. 
 
Computing the mean response time requires finding the equilibrium distribution of the Markov process under consideration. 
Finding the invariant distribution of this process is equivalent to finding the eigenvalues of a high dimensional operator, and as such it remains intractable even for single processor power state. 
We propose an independent \emph{tandem queue approximation} for two processor power states, 
and approximate the system evolution by a reversible Markov process for which the equilibrium distribution can be easily computed.  
We show that the mean sojourn time of an application in the approximating queue remains close 
to the actual empirical average sojourn time, 
for all range of arrival rates. 

While there are a number of proposals for OLDI energy management in the 
systems community, these proposals are empirical in nature 
(see Section~\ref{sec:related} for a discussion of related work). 
We are not aware of any theoretical studies that 
optimize energy given a performance (delay) constraint. 
Because studies \footnote{[n.d.],Speed Matters for Google Web Search,viewed 21 May 2021,https://services.google.com/fh/files/blogs/google\textunderscore delayexp.pdf, [n.d].,Velocity and the Bottom Line. http://radar.oreilly.com/2009/07/velocity-making-your-site-fast.html.}
have shown that 
even mild increases in latency in OLDIs can drastically hurt revenue, 
operators often employ conservative energy management strategies. 
We hope that providing theoretical grounding to energy management would 
enable operators to confidently deploy aggressive energy management policies
for OLDI applications without the fear of violating service-level agreements.

\subsection{Main Contributions}
Our study is a first step in the direction of analytical modeling of energy delay tradeoff in distributed information retrieval systems in the context of OLDI applications and we summarize our contributions below.
\begin{itemize}[noitemsep,topsep=0pt,parsep=0pt,partopsep=0pt]
\item We analytically model OLDI applications as $(n,k)$ fork-join queues and perform a systematic analysis of their power management considering that each processor core can work at two processor power states offering two different service rates.  
\item We introduce a probabilistic slowdown policy independent of the system state which enables easy implementation at all servers in a distributed fashion.
\item We show that any $(n,k)$ fork-join queues for two processor power states with probabilistic slowdown can be modeled as a series of virtual queues with pooled service. 
\item Owing to the complexity of analysing the actual system, we provide an approximating Markov process to find closed form expression for the limiting mean response time for an application and the mean power consumption in the system. 
\item Through numerical simulations, we validate that the theoretically derived approximate closed form expressions are very close to the actual values of the system performace metrics for all arrival rates when service times are exponentially distributed.  
\item Based on the approximate system, we characterize optimal slowdown probability to minimize energy consumption while satisfying a service guarantee on the mean response time.
\end{itemize}

\subsection{Background on OLDI applications} 
\label{sec:background}
\begin{figure}[h]
	\centering
	\includegraphics[width=0.35\textwidth]{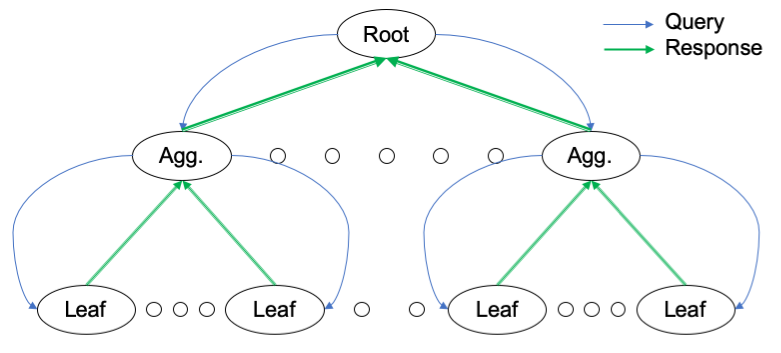}
	\caption{OLDI software architecture.}
	\label{fig:OLDI}
\end{figure}
To facilitate and scale query-access among a large number of servers, 
OLDI applications have a hierarchical, \emph{partition-aggregate}, 
tree-like software architecture~\cite{AlizadehACM2010} (see Fig.~\ref{fig:OLDI}). 
The front-end server receives user queries from the Internet and 
sends queries to a root node (server). 
The root node, in turn, sends queries to a series of intermediate, aggregate nodes 
(denoted by "Agg." in Fig.~\ref{fig:OLDI}). 
Aggregate nodes perform two functions: 
(1) They forward queries to a subset of leaf nodes, which contain data to be searched. 
Subsequently, the leaf nodes process queries and retrieve relevant results.
(2) Once the leaf nodes return matching results, the aggregate nodes perform some intermediate processing (e.g., ranking the query results from a subset of servers) before returning results to the root node.  
Leaf nodes constitute a vast majority of servers (e.g., about 90\%)~\cite{VamananACM2015} and consume most of the power in OLDI datacenters because 
a large fraction of query processing occurs at leaf nodes.
Therefore, most existing work on OLDI power management~\cite{DavidACM2014, VamananACM2015, KastureISM2015} 
focus on slowing down leaf servers to save energy.  


In such applications, each incoming query will be sent to all the leaf servers that hold the data relevant to the query. 
In the general case, a system would have many different types of queries, 
categorised into different classes. 
Accordingly, 
a query belonging to a certain class will be directed towards the subset of servers that hold data relevant to the particular class. 
In this work, we consider the simpler model of single class of queries, 
and the subset of servers containing the relevant data. 
Therefore for single class case under consideration, we can assume that all servers have data relevant to each incoming query, without any loss of generality. 
Modern OLDI implementations avoid stragglers from affecting average query-response times
by dropping replies that arrive after a set \emph{deadline}.
The deadlines are carefully tuned to achieve a good trade-off between power and faster response times.  
OLDI applications are usually set based on some high percentile (e.g., $99^{\text{th}}$) completion times. 
In our model, 
we assume that the root node waits for $k$ responses out of $n$ leaf nodes. 
We have modeled the response time by a stochastic random variable, 
and hence we have implicitly assumed a stochastic deadline instead of a deterministic deadline. 
However, when the number of servers $n$ is large, 
this stochastic deadline can converge to a deterministic deadline. 
Once $k$ responses are received, query is dropped from the remaining $(n-k)$ leaf nodes. 
That is, each OLDI query must lookup data in all leaf servers as in a typical $(n,k)$ fork-join system. 
\subsection{Related Work}
\label{sec:related}
A large body of prior work aggressively suggests powering down servers to save energy but suffer high latency, 
and therefore are not applicable for OLDIs~\cite{ChenACM2012, LangACM2010}. 
Unlike batch workloads, OLDIs require a global, cluster-wide approach for power management~\cite{MeisnerACM2011}.
Pegasus~\cite{DavidACM2014} and TimeTrader~\cite{VamananACM2015} address OLDI power management using 
centralized and distributed approaches, respectively. 
Several other papers address QoS and scheduling of OLDI jobs in datacenters~\cite{DelimitrouACM2013,CalinUSENIX2018,YangACM2013}.
Rubik~\cite{KastureISM2015} proposes a fine-grained DVFS scheme that controls the power states based on a statistical model without causing latency violations. 
Researchers have explored utilization of asymmetric multicore processors~\cite{HaqueISM2017}
and approximation~\cite{KulkarniCAL2018} to improve energy and performance. 
Other approaches include ensemble-level power budgeting~\cite{RanganathanSIGARCH2006}
and workload aware power budgeting~\cite{IsciISM2006}.
Network protocols for OLDIs also address tail latency for OLDIs~\cite{AlizadehACM2010,VamananSIGCOMM2012,WilsonSIGCOMM2011,AlmasiLANMAN2019}. 
However, unlike our work, all prior work is empirical. 
There is a large body of work on hard real time systems 
that deal with deadlines. 
However, unlike OLDIs, hard real time systems often operate 
under light load, their job completion times are predictable, 
they are centralized, and they are smaller in scale (e.g., a few racks). 
Therefore, those ideas are not directly applicable to OLDIs. 
There is existing work on co-locating batch (filler) jobs with latency-sensitive jobs 
like web search~\cite{CalinUSENIX2018}. Because OLDI jobs are prioritized ahead of 
filler jobs, which do not have latency constraints, our work is still 
applicable to derive optimal speeds for CPUs that host OLDI jobs.


As discussed earlier, we model the query processing system as an $(n,k)$ fork-join system where each query is forked to all $n$ servers and is considered to be completed once $k$ out of $n$ responses are obtained. 
Analytically, $(n,k)$ fork-join queues for homogeneous servers with single rates have been studied previously in the literature~\cite{Dimakis2010TIT, Shah2016TCOM, Joshi2014JSAC, Xiang2014SIGMETRICS, Xiang2016TNET, Abbasi2018TNET, Badita2019TIT, Gardner2016QS, Wang2019QS, Abbasi2019TNSM}. 
Even for the simple case of servers operating at a single service rate with \iid exponential service times, 
the closed form expressions for mean sojourn time in $(n,k)$ fork-join queues is not available for general $n$. 
Derivation of stationary probabilities for a fork-join system with two heterogenous servers can be found in~\cite{Flatto198SIAM} when arrivals are poisson and service times are exponentially distributed.
The problem remains open for systems with larger number of servers.
For exponentially distributed service times and Poisson arrivals,  
bounds are presented in~\cite{Nelson1988TC, Baccelli1989JSTOR, Shah2016TCOM, Joshi2014JSAC, Xiang2014SIGMETRICS, Xiang2016TNET, Abbasi2018TNET}, 
and analytical approximations in~\cite{Badita2019TIT}. 
Exact analysis for special case of large systems is considered in~\cite{Li2016Infocom,Wang2019QS}, 
and small systems in~\cite{Gardner2016QS}.  
Special case of $(n,1)$ fork-join queues is studied in~\cite{Wang2014PER, Joshi2015Allerton,  Sun2016arXiv, Qiu2016ACM}, 
and $(n,n)$ fork-join queues in~\cite{Wang2015Sigmetrics,Wang2019TOMPECS}.
Our analytical contribution differs from these previous works in the following ways.  
To the best of our knowledge, 
there has been no previous study of $(n,k)$ fork-join queues where all $n$ servers can operate at more than one rate.  
Further, all these works have a fixed service rate, and impact of server rate control on system performance has not been studied.

\emph{Notation:} 
We denote the set of reals by $\R$, 
the set of integers by $\Z$, 
the set of non-negative integers by $\Z_+$, 
the set of first $n$ positive integers by $[n]$. 

\section{System Model} 
\label{sec:model}
We consider a large-scale cluster of $n \in \Z_+$ leaf servers, where $n$ is very large.  
We consider the case when the query processing time at each node is the bottleneck, 
and the communication time is comparatively negligible. 
This is the case when the database at each node is very large leading to large search times, 
and the nodes are connected by high bandwidth links leading to small communication times. 
    While the overall response time includes both compute (i.e., within CPU) 
    and communication delays (i.e., within network), the two components are 
    independent and can be optimized separately. 
    Our proposal considers only compute delays to optimize energy but if there is 
    slack from the network, we could further improve energy savings. 
    As such, accounting for network delays would provide more opportunity for 
    energy savings.
Queries arriving in the system is modeled as a Poisson process with rate $\lambda$. 
We assume that each incoming query is sent to all $n$ leaf server nodes, where it is queued for the service. 
\subsection{Fork-join queueing system}
We assume a query response system where a query is assumed to be served if any $k$ of the $n$ leaf nodes reply to the root node. 
Each incoming query at the leaf nodes is queued for service. 
We refer to this query response system as an $(n,k)$ fork-join queueing system. 
In turn, the root node requests the straggling $n-k$ leaf nodes to drop this query from their respective queues, 
and this query leaves the system. 
We assume this feedback is error-free and instantaneous, and straggling lead nodes cancel this query immediately. 
It has been shown that cancellation delay for fork-join systems affect the system performance~\cite{LeeTIT2018}. 
However, we assume these delays to be negligible for analytical tractability, and optimistic performance gains. 
Since we are looking at high percentile reception, we assume the fraction $\frac{k}{n}$ is very close to unity.  


\subsection{Service}
The service time for each query at each leaf node is assumed to be random and independent accounting for unpredictable service variability due to multitude of independent background processes running on each server~\cite{ChengIMC2014}. 
Accordingly, the service time for $l$th query at leaf node $j$ is denoted by $T_{jl}$, 
and is assumed to be a random variable independent of all other random variables.  

The processing time distributions in compute clusters can be well modeled by shifted exponential distributions~\cite{LeeTIT2018, BitarISIT2017,XiangTNET2016,AlTNET2018}. 
Shifted exponential service time is sum of a deterministic start-up time, and a random memoryless service time. 
When the initial constant shift is negligible compared to the mean of the memoryless part, 
the service distribution can be taken to be an exponential distribution. 
This fact together with the analytical tractability afforded by the memoryless distribution motivated us to take each service time to be distributed exponentially. 
Consideration of the fork-join queue is not work conserving for non-memoryless service distributions. 
Therefore, our assumption of memoryless service distribution captures the optimistic gains offered by the system. 

We assume that each leaf node selects a service rate from a finite set of available rates.
This assumption is motivated by support for multiple frequency--voltage levels for each core in today's multi-core CPUs~\cite{IsciISM2006}. 
For simplicity of presentation, we assume two service rates, i.e. $\E T_{jl} \in \set{\frac{1}{\mu_0}, \frac{1}{\mu_1}}$ where the rate $\mu_1 > \mu_0$ without any loss of generality. 
Our analysis can be generalized to any finite number of rates. 
  
A query is removed from a leaf node server queue if, either service is completed at this node, 
or $k$ other leaf nodes have completed service for this query.   
We assume that the server at each leaf node has a first-come-first-served (FCFS) queue.  
This assumption is motivated by analytical tractability, and the fact that the waiting times averaged over queries remain unchanged for any work-conserving queueing policy~\cite{NelsonTOC1988}. 
\subsection{Slowdown}
Each core is assumed to have two possible rates $\set{\mu_0, \mu_1}$ to choose from. 
For the best performance, each core should work at highest possible rate $\mu_1$. 
Therefore, any policy where certain cores are chosen to work at the lower rate $\mu_0$ is called a \emph{slowdown policy}. 
We consider a simple \emph{distributed probabilistic} slowdown policy, 
where each core \emph{independently} decides to speedup or slowdown at the beginning of each service.
We assume instantaneous speedup or slowdown for the ease of analytical tractability. 
Let $\xi_{j,l} \in \set{0,1}$ be a Bernoulli random variable where $\xi_{j,l}=1$ indicates that the $j$th leaf node is working on $l$th query at higher rate $\mu_1$. 
We assume $(\xi_{j,l}: j \in [n], l \ge 1)$ to be an \iid sequence with $\E \xi_{j,l} = p$, 
the probability of each forked query being served at high rate $\mu_1$.
As mentioned earlier, 
we adopt a probabilistic speed control that is distributed in nature. 
\subsection{Power Consumption}
It is well-established that power consumption of the server (CPU) is an increasing function $P(\cdot)$ of the service rate $\mu$~\cite{BorkarMICRO1999}. 
Therefore, we define $P_j \triangleq P(\mu_j)$ for $j \in \set{0,1}$. 
Since service rate $\mu_1 > \mu_0$, we have $P_1 > P_0$. 
A server is called \emph{active}, if it is working on some query.  
At time $t$, we denote the number of active servers with $M(t)$, 
and the number of active servers working at the higher service rate $\mu_1$ by $M_h(t)$. 
Their limiting means are given by 
\eqn{
\label{eqn:MeanNumServersLevel}
&\bar{M}_h \triangleq \lim_{T \to \infty}\frac{1}{T}\int_{0}^TM_h(t)dt,&
&\bar{M} \triangleq \lim_{T \to \infty}\frac{1}{T}\int_{0}^TM(t)dt.
}
Notice that $M(t) \le n$, since not all servers are working at all times.   
We assume that idling servers don't consume any power. 
Clearly, this is not true since there is some nominal power consumption 
even when a server is not working on any query. 
However, these additional consumptions remain even when a server is working on a query, 
and hence we ignore this constant power offset. 
%
Under the above assumptions, 
the power consumption at any time $t$, is given by $P(t) = P_0(M(t)-M_h(t)) + P_1M_h(t)$. 
The limiting average of power consumption is denoted by $\bar{P}$, 
and defined by 
\EQN{
\label{eqn:MeanPower}
\bar{P} \triangleq \lim_{T\to \infty}\frac{1}{T}\int_{0}^TP(t)dt 
= P_0\bar{M} + (P_1-P_0)\bar{M}_h. 
}


\subsection{Performance Metrics} 
Let $R(t)$ be the number of queries in the system at any time $t$, 
then the limiting mean number of queries in the system is given by 
\EQN{
\label{eqn:MeanNumQueries}
\bar{R} \triangleq  \lim_{t \to \infty} \E[R(t)].
}
By Little's law, the limiting mean number of queries in the system is directly proportional to limiting mean sojourn time of the queries in the system, 
and is given by $\frac{1}{\lambda}\bar{R}$.  
Therefore, we focus on the limiting mean number of queries in the system $\bar{R} \in [0, \infty)$, 
as a measure of the processing performance of the system. 
Second measure of the performance is the limiting mean of the power consumption as defined in Eq.~\eqref{eqn:MeanPower}. 


Our goal is to reduce both the limiting mean of the number of queries $\bar{R}$ and the limiting mean power consumption $\bar{P}$. 
It is clear at the outset that it is not possible to reduce both at the same time. 
For the case when $p=1$, 
all the servers work at the high rate $\mu_1$ or equivalently $M_h(t) = M(t)$ at all times $t$. 
In this case, the queries are processed at the fastest speed resulting in maximum power consumption.  
On the other hand when $p=0$, 
all the servers work at the low service rate $\mu_0$ or equivalently $M_h(t) = 0$ at all times $t$.  
In this case, the queries are processed at the slowest speed minimizing the power consumption. 
Naturally, there exists a trade-off between the mean query completion time and the power consumption at servers, 
depending on the number of active servers $M_h(t)$ working at the high service rate $\mu_1$ and the system load which affects the total number of active servers $M(t)$. 
This tradeoff is controlled by a single parameter $p$, probability of choosing high service rate $\mu_1$ for any server. 

\section{Background on fork-join queues}
\label{sec:backgroundForkJoin}
We have assumed until now that each server consists of a single core which can work at two rates.
But in this section, we study the special case of single-rate servers. That is, the $(n,k)$ fork-join system has a single service rate $\mu_0 = \mu_1$, 
and the stochastic evolution of this system is Markov under our modeling assumptions. 
For the general case of multi-rate servers, the system state has to be embellished for the system evolution to remain Markov. 
Nevertheless, we use the system state for single-rate server as the stepping stone to the more complex general case. 

Recall that we have a system of $n$ servers which are working at one of the two rates, 
namely, $\set{\mu_0, \mu_1}$ and each incoming query is forked in to all the $n$ servers. 
Queries are served at each server in FCFS fashion, and leave the server after response completion. 
Once $k$ servers have responded for a unique query, 
it immediately exits all the remaining $n-k$ servers before their response completions. 
Let $Y_i(t) \in \Z_+$ denote the number of queries at time $t$ with completed responses from $i$ servers. 
Since a query leaves the system after $k$ response completions, $i \in \set{0, \dots, k-1}$,  
and denote the vector $Y(t) \triangleq (Y_0(t), \dots, Y_{k-1}(t))$ that captures the status of queries in the  system at any time $t$. 
Let $N_i(t) \in \set{0, \dots, k-1}$ denote the number of active servers in the system available to serve a query with $i$ response completions if such a query exists in the system at time $t$. 
We define the sequence $N(t) \triangleq (N_0(t), \dots, N_{k-1}(t))$. 
\begin{rem}
Due to the FCFS service policy, 
the queries with $(i+1)$ responses are older than queries with $i$ responses,  
and the set of servers responding to queries with $i$ responses have already responded to the queries with $(i+1)$ responses. 
Therefore, if the head query of a server has $i$ responses, 
it has already served queries with $(i+1)$ responses, 
and is not useful to them. 
\end{rem}

\begin{rem}
An incoming query has zero responses, 
and when a query with $i$ responses gets served by one of the $N_i(t)$ servers,  
it becomes a query with $i+1$ completed responses. 
Hence, the sequence $Y(t) \triangleq (Y_0(t), \dots, Y_{k-1}(t))$ is akin to occupancy of $k$ queues in tandem, 
with incoming queries joining queue~$0$, and completed queries leaving the queue~$k-1$. 
We observe that a state transition from the current state $y \in \sY \triangleq \Z_+^{\set{0, \dots, k-1}}$ is a departure from queue $i-1$ that leads to an arrival to queue $i$. 
Defining $e_i$ as the unit vector in the $i$th dimension, 
the next state is denoted by 
\EQ{
a_i(y) \triangleq \begin{cases}
y + e_0, & i = 0,\\
y-e_{i-1}+e_i, & i \in [k-1],\\
y-e_{k-1},&i = k.
\end{cases}
}
\end{rem}
Next remark discusses a result from \cite[Corollary 4]{Badita2019TIT} which shows that the sequence $N(t)$ of number of active servers can be completely determined by the tandem queue occupancy $Y(t)$. 
\begin{rem}
\label{rem:N-Y_Relation}
For the $(n,k)$-fork-join queueing system described above, 
the vector $Y(t)$ represents the occupancies in a $k$-tandem queue at time $t$.  
If $Y(t) = y$, then the number of available servers to serve the head query at $i$th tandem queue is 
$N_i(t) = N_i(y)$ such that 
\EQN{
\label{eqn:NumServersPerQueue}
N_i(y) = 
\begin{cases}
n-k+1, & i = k-1,\\
1 + N_{i+1}(y)\SetIn{y_{i+1} = 0}, &i < k-1.
\end{cases}
}
\end{rem}
That is, when a query has $k-1$ response completions, it will be ahead of all other queries in all the remaining $n-(k-1)$ servers that are useful to it.
Similarly, a query with $i$ response completions will have $n - i$ servers useful to it for $i < k-1$. 
But all these servers except one will be useful to the queries with $i+1$ response completions as well.
So, in order for the query with $i$ response completions to get served by those servers, 
it has to wait until the queries with $i+1$ response completions complete its service at these servers.
That is, $N_i(t) = 1$ when there are queries with $i+1$ response completions in the system.  
If the system is devoid of any queries with $i+1$ response completions, then besides the already available single server, the query with $i$ response completions can be served by the servers available to serve the query with $i+1$ response completions. 

\begin{prop}\cite[Corollary 4]{Badita2019TIT}
\label{prop:CTMCNoSlowdown}
For the $(n,k)$ fork-join query response system under consideration, 
if the two service rates are identical, i.e. $\mu_0 = \mu_1 =\mu$, 
then $Y \triangleq (Y(t) \in \sY: t\in \R_+)$ is a continuous time Markov chain.
\end{prop}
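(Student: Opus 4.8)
\emph{Proof idea.} The plan is to write down, for every state $y\in\sY$, transition rates that are functions of $y$ alone, and then to argue via the memoryless property of the exponential service and interarrival times that the evolution of $Y$ after time $t$, conditioned on $Y(t)=y$, does not depend on the history $(Y(s):s\le t)$. Fix $y\in\sY$. The only events that can alter $Y(t)$ when $Y(t)=y$ are: an external arrival, which under the Poisson$(\lambda)$ assumption occurs at rate $\lambda$ and sends the state to $a_0(y)=y+e_0$; and, for each $i\in\{0,\dots,k-1\}$ with $y_i>0$, a response completion for the head-of-line query at the $i$th virtual queue by one of the servers currently serving it, which sends the state to $a_{i+1}(y)$. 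By Remark~\ref{rem:N-Y_Relation}, when $Y(t)=y$ the number of servers working on that head-of-line query equals the deterministic quantity $N_i(y)$ from~\eqref{eqn:NumServersPerQueue}; since the per-query service times are i.i.d.\ exponential with rate $\mu$, the superposition of these $N_i(y)$ service clocks fires at rate $\mu N_i(y)$. This yields the candidate generator rates
\[
q\big(y,a_0(y)\big)=\lambda,\qquad q\big(y,a_{i+1}(y)\big)=\mu\,N_i(y)\,\SetIn{y_i>0},\quad i\in\{0,\dots,k-1\}.
\]

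It remains to justify that these rates actually govern the process, i.e.\ that $Y$ has the Markov property. The key observation is that, conditioned on $\{Y(t)=y\}$, the residual interarrival time and the residual service times of all busy servers are jointly independent, with laws $\operatorname{Exp}(\lambda)$ and $\operatorname{Exp}(\mu)$ respectively, \emph{independently of} $(Y(s):s\le t)$: the Poisson arrival stream gives a residual interarrival time that is $\operatorname{Exp}(\lambda)$ by memorylessness; each busy server is part-way through an $\operatorname{Exp}(\mu)$ service, hence has residual service time $\operatorname{Exp}(\mu)$, again by memorylessness; and, crucially, by Remark~\ref{rem:N-Y_Relation} the data ``which servers are busy and on the head-of-line query of which virtual queue'' is itself a function of $y$ only. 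Hence the time to the next event and the identity of that event are determined in law by $y$ alone: the sojourn time in state $y$ is $\operatorname{Exp}\big(\lambda+\mu\sum_{i:y_i>0}N_i(y)\big)$, and at the jump instant $Y$ moves to $a_0(y)$ or to some $a_{i+1}(y)$ with probabilities proportional to $\lambda$ and $\mu N_i(y)$. Since two of these events coincide with probability zero, $Y$ is a continuous-time Markov chain on $\sY$ with the rates displayed above.

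The main obstacle is the middle step — establishing that, given $Y(t)=y$, the joint law of all residual clocks is a fixed product of exponentials that does not remember the path to $y$. The subtle point is that as queries advance down the tandem, physical servers are continually re-associated to the head-of-line queries of different virtual queues (this is exactly what the recursion in~\eqref{eqn:NumServersPerQueue} encodes), so a priori a server could carry accumulated elapsed service across such a re-association. Memorylessness of the exponential is precisely what prevents this from influencing the residual-time law, and Remark~\ref{rem:N-Y_Relation} is what lets the argument be phrased entirely in terms of the occupancy vector $y$; once these are in hand, identifying the jump targets $a_i(y)$ and the rates is bookkeeping already prepared by the remarks preceding the statement.
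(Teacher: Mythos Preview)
Your proposal is correct and follows essentially the same approach as the paper's proof: both invoke Remark~\ref{rem:N-Y_Relation} to show that the number of active servers at each stage is a deterministic function of $y$, then use memorylessness of the Poisson arrivals and exponential services to conclude that the holding times are exponential and the jump probabilities depend only on the current state, yielding the same transition rates $Q(y,a_0(y))=\lambda$ and $Q(y,a_{i+1}(y))=\mu N_i(y)\SetIn{y_i>0}$. Your treatment is more explicit about the residual-clock argument and the re-association subtlety, but this is elaboration rather than a different route.
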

\begin{proof}
We observed that the state of the system can be modeled by the tandem queue occupancy $Y(t)$, 
where the occupancy of the $i$th queue in tandem is given by the number of queries with $i$ completed responses $Y_i(t)$.  
As $Y_i(t) \in \Z_+$, for $i \in \{0,1,\cdots,k-1\}$, the state space is countable. 
Since the arrival and service completions are independent and memoryless, 
the inter-transition times remain memoryless, 
and independent of the past conditioned on the current sate $Y(t) = y$.  
From Remark~\ref{rem:N-Y_Relation}, we see that the total number of servers $N_i(t)$ serving the stage~$i$ at time $t$ is completely determined by the vector $Y(t)$. 
Therefore, when the two service rates are identical, 
the next inter-transition time for transition $a_i(y)$ at level~$i$, 
is given by the minimum of remaining service times for all active servers $N_{i-1}(y)$. 
Thus, it follows that the inter-transition times possess memoryless property and the jump probabilities depend only on the previous state. 
This proves that the process $Y$ is a CTMC~\cite[Chapter 4]{Ross1995Wiley}.
In addition, we can write the transition rate for the transition to the possible next state $a_i(y)$ as 
\EQ{
Q(y,a_i(y)) = \begin{cases}
\lambda,& i = 0,\\
{N_{i-1}(t)\mu} \SetIn{y_{i-1} > 0}, & i \in [k].
\end{cases}
}
\end{proof}

\section{System evolution}
\label{sec:Analysis} 
In this section, we will study the evolution of an $(n,k)$ fork-join query response system, 
with distributed probabilistic slowdown for the case of single-core servers with rates in $\set{\mu_0,  \mu_1}$.
From Proposition~\ref{prop:CTMCNoSlowdown}, 
we observe that the system evolution can be modeled by the Markov process $Y$ that captures the tandem queue occupancy, 
when the two service rates are equal. 
In the general case, when the two rates are not equal, 
the inter-transition times remain memoryless. 
However, 
the transition rates are governed by the number of high and low rate servers at each stage of the tandem queue. 
Therefore, we define the number of high rate servers at stage~$i$ at time~$t$ by $H_i(t)$, 
and the vector $H(t) \triangleq  (H_0(t), \dots, H_{k-1}(t))$. 
We observe that the number of queries in the system and the number of servers working at a higher rate at time $t$ can be written in terms of the system state $Z(t)$ as
\meq{2}{
&R(t) = \sum_{i=0}^{k-1}Y_i(t), &&{M_h(t)} = \sum_{i=0}^{k-1}H_{i}(t).
} 
It follows that in terms of the system state $Z(t)$, 
we can describe the tuple $(N(t), H(t), R(t))$ at all times $t$, 
and therefore we can compute the two performance metrics, namely the mean number of queries in the system $\bar{R}$ and the limiting power consumption $\bar{P}$.

We define the joint state at $i$th tandem queue as 
$Z_i(t) \triangleq  (Y_i(t), H_i(t))  \in \Z_+\times \set{0, \dots, n-i}$ and the aggregate vector for queues in tandem as 
\EQ{
Z(t) \triangleq (Z_0(t), \dots, Z_{k-1}(t)) \in \sZ \triangleq (\Z_+\times [n])^{\set{0, \dots, k-1}}.
} 
It follows that, it is sufficient to characterize the evolution of $Z(t)$ to characterize the number of queries in the system at any time~$t$. 

But, in order to evaluate the performance metrics of interest, it is necessary to study the stationary distribution of the system.
\begin{thm}
The aggregate process $(Z(t), t\in \R_+)$ is a continuous time Markov chain 
for any $(n,k)$ fork-join FCFS queueing system with Poisson arrivals and $n$ independent servers having independent exponential service and random rate selection policy. 
\end{thm}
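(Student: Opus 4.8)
The plan is to follow the template of the proof of Proposition~\ref{prop:CTMCNoSlowdown}, upgrading it to handle the two distinct service rates. First I would note that the state space $\sZ = (\Z_+\times[n])^{\set{0,\dots,k-1}}$ is countable, so it suffices to establish two things: that the inter-transition times are memoryless, and that the distribution of the state immediately after a transition depends on the past only through the current state $Z(t)$. Relative to the single-rate case, the only genuinely new content is the evolution of the high-rate counts $H_i(t)$, since the occupancy part $Y(t)$ behaves exactly as the $k$-tandem queue of Section~\ref{sec:backgroundForkJoin}.

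For the memorylessness, the randomness driving the system comes from the Poisson arrival stream, the exponential service times at the active servers, and the i.i.d.\ Bernoulli$(p)$ rate draws $\xi_{j,l}$, the last of which are consumed only at the epochs at which a server begins a fresh service. The key observation is that, given $Z(t)=z$, the number $N_i(t)=N_i(Y(t))$ of servers attached to stage~$i$ is a deterministic function of the $Y$-coordinates of $z$ by Remark~\ref{rem:N-Y_Relation}, while the number of these running at the high rate is precisely the recorded coordinate $H_i$. Hence the competing clocks are an arrival clock of rate $\lambda$ and, for each stage $i$ with $y_i>0$, a high-rate completion clock of rate $H_i\mu_1$ and a low-rate completion clock of rate $(N_i(z)-H_i)\mu_0$; all of these rates are functions of $z$ alone. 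By the memoryless property of the exponential, the residual of every clock is again exponential with the same parameter irrespective of elapsed time, so the time to the next transition is exponential with rate $q(z)=\lambda+\sum_{i\,:\,y_i>0}\big(H_i\mu_1+(N_i(z)-H_i)\mu_0\big)$, which depends only on $z$.

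Next I would describe the jump kernel. A transition is either the arrival $a_0$ or a completion $a_i$ of a stage-$(i-1)$ query, and in each case the new occupancy vector is $a_i(y)$ exactly as in the single-rate setting. The new $H$-vector is then obtained by a deterministic bookkeeping step — recompute the $N_j(\cdot)$ from the new occupancy via the formula of Remark~\ref{rem:N-Y_Relation}, and identify which servers thereby start a new service: the server that just completed and picks up the next head query at its stage, any block of servers "recalled" upward to a stage that just became nonempty, or any block "pooling down" to help a lower stage that just emptied — followed by fresh, independent coin flips, one Bernoulli$(p)$ per such server, so that each affected $H$-coordinate is resampled from a binomial law with the appropriate, state-determined number of trials. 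Since the recall/pool-down pattern is dictated entirely by the emptiness pattern of $a_i(y)$ and the new coin flips are independent of everything in the past, the law of $Z$ right after the transition depends only on $z$ and the identity of the transition, hence only on $z$. Together with the previous paragraph and the standard characterization of a CTMC~\cite[Chapter 4]{Ross1995Wiley}, this proves the claim.

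I expect the main obstacle to be the server-reallocation bookkeeping itself: pinning down, as an explicit function of the post-transition occupancy vector, exactly which servers begin a new service — in particular handling the cascade that can arise when a stage empties and its servers pool downward (possibly depleting the next stage's dedicated pool in turn) or when a stage refills and servers are recalled upward — and verifying that every server that changes the query it is serving genuinely redraws its rate, so that no stale rate choice leaks into the transition kernel. Once this reallocation rule is fixed, the Markov property follows directly from the memorylessness of arrivals, services, and rate draws.
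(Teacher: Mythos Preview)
Your proposal is correct and follows essentially the same approach as the paper: verify countability of $\sZ$, argue that inter-transition times are exponential with rates determined by the current state (via the Poisson arrivals and the exponential service clocks whose rates are read off from $N_i(y)$ and $H_i$), and argue that the jump kernel depends only on the current state because server reallocations and fresh Bernoulli rate draws are functions of the post-transition occupancy, then invoke the standard CTMC characterization~\cite[Chapter 4]{Ross1995Wiley}. The paper's own proof is in fact much terser than yours --- it simply observes (a) at most one transition per infinitesimal interval, (b) memoryless inter-transition times, (c) jump probabilities depending only on the previous state --- and defers the explicit clock rates and bookkeeping (your second and third paragraphs) to the subsequent ``State transitions'' and ``Transition Rates'' subsections, so your write-up effectively folds that later material into the proof itself.
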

\ieeeproof{
We first observe that aggregate system state takes countable and changes only at the external arrival or service completion at one of the servers. 
Since arrivals and service completions are all independent and have continuous memoryless distribution, 
it follows that
(a) we can have at most a single transition in an infinitesimal time interval, 
(b) the inter-transition times have continuous memoryless distribution, and 
(c) the jump probabilities would depend only on the previous state. 
Hence, the aggregate process $(Z(t), t \in \R_+)$ is a continuous time Markov chain~~\cite[Chapter 4]{Ross1995Wiley}.
}
\subsection{State transitions}
Given the current state $Z(t) = (y,h)$ at time $t$, 
the states transition only at the event of an external arrival or service completion at one of the stages of the tandem queue. 
We treat these two cases separately. 
\subsubsection{External arrival}
An external arrival changes the tandem queue occupancy of queries from $y \to a_0(y)$.  
If there are existing queries at stage~$0$, number of high rate servers remain unchanged. 
Otherwise, there are $N_0(a_0(y))$ idle servers that can potentially serve this query. 
Because of random rate selection policy, 
each of the idle servers can select a high rate server independently with probability $p$. 
Therefore, the number of high rate servers at stage~$0$ can change from $h_0 \to h_0 + B_0$ 
where $B_0$ is a Binomial random variable with parameters $(N_0(a_0(y)),p)$.  
The number of high rate servers at other stages remain unchanged. 

\subsubsection{Service completion}
A service completion of a query with $(i-1)$ response completions causes this query to depart from $(i-1)$th stage of the tandem queue for $i \le k$, 
and this transition results in the tandem queue occupancy to change from $y \to a_i(y)$. 
If the stage~$i$ has existing queries or if $i=k$,  
then the only change in number of high rate servers occur at the ones serving the stage~$(i-1)$. 
If the service completion occurs at a high rate server, 
then it can switch to a high or low rate server with probability $p$ and $(1-p)$ respectively. 
In the first case, the number of high rate servers remain unchanged, 
and in the second it is reduced by a unit amount at $(i-1)$th stage. 
Similarly, if the service completion occurs at a low rate server, 
then the number of high rate server increases by a unit  if the server switches to the higher rate, 
and it remains unchanged otherwise. 
The number of high rate servers at other stages remain unchanged. 

When $i < k$, a departure from $(i-1)$th stage of the tandem queue leads to an arrival at $i$th stage. 
If there were no queries at stage $i$, then newly arrived query can get serviced by a pool of $N_i(a_i(y))$ idle servers. 
Each of these idle servers can choose to serve at any of the two available rates in an \iid manner. 
Thus, when $y_i = 0$, then the number of high rate servers $h_i$ at stage $i$ transitions to the state $h_i + B_i$, 
where $B_i$ is a Binomial random variable with parameters $(N_i(a_i(y)),p)$.
 
\subsection{Transition Rates}
Next, we look at the transition rates corresponding to the state transitions listed in the previous subsection. 
Again, we treat the cases of external arrivals and service completions separately. 
We denote the state of the system at time $t$ by $Z(t) = (y,h)$, 
where sequence $y$ is the occupancy of the tandem queue, 
and sequence $h$ is the number of high rate servers active at each stage of the tandem queue. 
Recall that given the occupancy $y$ of the tandem queue, 
one can determine the number of pooled servers at stage~$i$ by $N_i(y)$. 
We will denote Bernoulli random variables with parameters $(N_i(a_i(y)), p)$ by $B_i$ at each stage~$i$, 
and we can write 
$
p_i(b) \triangleq P\set{B_i = b},\quad b \in \set{0, \dots, N_i(a_i(y))}. 
$

\subsubsection{External Arrival} 
Since the arrival process for queries into the system is Poisson with rate $\lambda$, 
we can write the transition rate from current state $z = (y,h)$ to next state $z' = (a_0(y), h')$ as  
\EQ{
Q(z,z') = 
\begin{cases}
\lambda, & h' = h, y_0 > 0,\\
\lambda p_0(b), & h' = h + be_0, y_0 = 0.
\end{cases}
}
\subsubsection{Service Completion} 
Given the current state $(y,h)$ at time $t$, 
there are $h_i$ high rate servers and $N_i(y)-h_i$ low rate servers serving the stage~$i$. 
Since all the service times are independent and memoryless, 
the random time for a service completion at stage~$i$ is also memoryless with rate $\tilde{\mu}_i \triangleq h_i\mu_1+(N_i(y)-h_i)\mu_0$. 
The corresponding rates for service completion due to a high rate server is $h_i\mu_1$, 
and due to a low rate server is $(N_i(y)-h_i)\mu_0$. 
Therefore, we can write the transition rate from current state $z = (y,h)$ to next state $z' = (a_{i+1}(y), h')$, 
when either $i=k-1$ or $y_{i+1} > 0$, as 
\EQ{
Q(z,z') = 
\begin{cases}
h_i\mu_1p + (N_i(y)-h_i)\mu_0(1-p), & h' = h,\\
h_i\mu_1(1-p), & h' = h-e_i,\\
(N_i(y)-h_i)\mu_0 p, & h' = h+e_i.
\end{cases}
}
When $i < k-1$ and the next stage of tandem queue is empty, i.e. $y_{i+1} = 0$, 
then we can write the transition rate $Q(z,z')$ from current state $z = (y,h)$ to next state $z' = (a_{i+1}(y), h')$, 
in terms of $\delta_h \triangleq h' - h$ 
as 
\EQ{
\begin{cases}
p_{i+1}(b)(h_i\mu_1p + (N_i(y)-h_i)\mu_0(1-p)), & \delta_h = be_{i+1},\\
p_{i+1}(b)h_i\mu_1(1-p), & \delta_h = be_{i+1} -e_i,\\
p_{i+1}(b)(N_i(y)-h_i)\mu_0 p, & \delta_h = e_i+be_{i+1}.
\end{cases}
}

It follows that, we can identify the evolution of $(n,k)$ fork-join queue with probabilistic slowdown, 
as a continuous time Markov chain. 

Even though, we were able to completely characterize the evolution of this Markov process, 
finding the limiting distribution of this process remains intractable analytically. 
Therefore, to gain insights into this system, 
we resort to the following approximation. 
\section{Approximation}
\label{sec:Approx}
The Markov process $Z(t)$ is non-reversible in general, 
and it is difficult to compute its equilibrium distribution. 
Therefore, we propose the following reversible Markov approximation for the occupancy vector $Y(t)$ and compute its equilibrioum distribution. 
\begin{appx}
Consider a reversible Markov process $\bY \triangleq (\bY(t) \in \R_+^{\set{0, \dots, k-1}}: t \in \R_+)$ of $k$ queues in tandem, 
where the external arrival process to queue~$0$ is Poisson with rate $\lambda$, 
and each level has an independent dedicated server with independent service time. 
The random service times at level~$i$ are \iid memoryless with rate $\bN_i\bmu$. 
If $\bpi_i$ is the marginal equilibrium distribution for occupancy at level~$i$, 
then 
\EQN{
\label{eqn:ApproxNumServers}
\bN_i = 
\begin{cases}
n-k+1, & i = k-1,\\
1 + \bN_{i+1}\bpi_{i+1}(0), & i < k-1.
\end{cases}
}
Further, the mean service rate $\bar{\mu}$ is the reciprocal of mean service time at each server, i.e.
\EQN{
\label{eqn:ApproxServiceRate}
\bar{\mu} = \left(\frac{p}{\mu_1} + \frac{1-p}{\mu_0}\right)^{-1}.
}
\end{appx}
\begin{rem} 
Recall that the evolution of occupancy vector $Y(t)$ depends on the number of high rate servers $H$ at each time $t$. 
We have two-fold approximations, 
we have replaced the effective service rate at level~$i$ by the reciprocal of mean service time at each server. 
Due to distributed probabilistic slowdown, each server can be in high or low rate with probability $p$ and $1-p$ respectively, 
in an \iid fashion at the beginning of each service. 
Hence, the mean service time is $1/\bmu$, where $\bmu$ is defined in Eq.~\eqref{eqn:ApproxServiceRate}. 
This reduces our problem to single-rate tandem queue, and we can use the independent queue approximation introduced in~\cite{Parag2017Infocom, Badita2019TIT}. 
Recall that a sequence of unpooled tandem queues with Poisson arrivals and memoryless service, 
have independent occupancies. 
In the independent queue approximation, 
the state-dependent number of parallel servers in the pooled tandem queue at any level~$i$ is replaced by its mean under the independent queue model. 
That is,  Eq.~\eqref{eqn:ApproxNumServers} is obtained by taking expectation of Eq.~\eqref{eqn:NumServersPerQueue} under the independent queue assumption. 
\end{rem}

\subsection{Equilibrium distribution of approximating Markov process}
It is easy to compute the equilibrium distribution of this approximating reversible Markov process $\bY$ and is given as follows. 
\begin{rem}\cite[Section~2.2, Page~33]{Kelly2011Cambridge}
\label{rem:ApproxMarkovProcess}
The Markov process $\bY$ is reversible, with equilibrium distribution denoted by $\bpi$ and 
\EQN{
\label{eqn:EqDistApprox}
\bpi(y) = \prod_{i=0}^{k-1}\bpi_i(y_i) = \prod_{i=0}^{k-1}(1-\rho_i)\rho_i^{y_i},
}
where the load on level~$i$ is denoted by $\rho_i \triangleq \frac{\lambda}{\bar{N}_i\bar{\mu}}$. 
The probability of $i$-th tandem queue being empty is given by $\bpi_i(0) = 1 - \frac{\lambda}{\bar{N}_i\bar{\mu}}$. 
\end{rem}
We observe that the equilibrium distribution $\bpi$ is computed in terms of the parameters $(\bN_0, \dots, \bN_{k-1})$, 
and these parameters are in turn computed in terms of equilibrium distribution $\bpi$. 
Now, we try to characterize an important quantity of interest in the computation of mean power consumption; the mean number of active servers.
This characterization is presented in the following Lemma for the approximate $(n,k)$ fork-join system governed by the reversible Markov chain $\bY(t)$. 
\begin{lem}
\label{lem:MeanActiveServers}
The mean number of active servers in the $(n,k)$ fork-join system is approximated as
\EQ{
\bar{M} = \sum_{i=0}^{k-1}\frac{\lambda (n-i)}{\bar{N}_i\bar{\mu}}\prod_{j=0}^{i-1} \left(1 - \frac{\lambda}{\bar{N}_j\bar{\mu}}\right), 
}
where the parameter
\EQ{
\bar{N}_i = (n - i ) - (k-1-i)\frac{\lambda}{\bar{\mu}},~i \in \set{0, \dots, k-1}.
}
\end{lem}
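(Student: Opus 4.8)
The plan is to compute $\bar M$ in two stages: first derive the closed-form expression for the parameters $\bar N_i$, and then use the equilibrium distribution $\bpi$ from Remark~\ref{rem:ApproxMarkovProcess} to express $\bar M$ as a sum over levels. For the first stage, I would start from the recursion in Eq.~\eqref{eqn:ApproxNumServers}, namely $\bar N_i = 1 + \bar N_{i+1}\bpi_{i+1}(0)$ for $i < k-1$ with terminal value $\bar N_{k-1} = n-k+1$. Substituting $\bpi_{i+1}(0) = 1 - \frac{\lambda}{\bar N_{i+1}\bar\mu}$ from Remark~\ref{rem:ApproxMarkovProcess}, the product $\bar N_{i+1}\bpi_{i+1}(0)$ telescopes beautifully: $\bar N_{i+1}\bigl(1 - \tfrac{\lambda}{\bar N_{i+1}\bar\mu}\bigr) = \bar N_{i+1} - \tfrac{\lambda}{\bar\mu}$, so the recursion collapses to $\bar N_i = 1 + \bar N_{i+1} - \frac{\lambda}{\bar\mu}$. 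This is a simple first-order linear recursion with constant increment $1 - \frac{\lambda}{\bar\mu}$, which I would solve by unrolling from $i$ up to $k-1$: $\bar N_i = \bar N_{k-1} + (k-1-i)\bigl(1 - \tfrac{\lambda}{\bar\mu}\bigr) = (n-k+1) + (k-1-i) - (k-1-i)\tfrac{\lambda}{\bar\mu} = (n-i) - (k-1-i)\tfrac{\lambda}{\bar\mu}$, which is exactly the claimed formula. A small sanity check ($i = k-1$ gives $\bar N_{k-1} = n-k+1$, consistent) completes this step.

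For the second stage, I would count active servers level by level. At level~$i$, the number of servers actively working equals the number of pooled servers $\bar N_i$ precisely when level~$i$ is nonempty, but the claimed expression uses the coefficient $n-i$ rather than $\bar N_i$; the resolution is that a server is "active" (working on \emph{some} query, at some level) exactly when it is assigned to the highest nonempty tandem stage it is eligible for. Concretely, the total number of active servers should be written as $M(t) = \sum_{i=0}^{k-1}(n-i)\,\Indicator{A_i(t)}$, where $A_i(t)$ is the event that level~$i$ is nonempty while levels $i+1,\dots,k-1$ are all empty — in that configuration the $n-i$ servers eligible for level~$i$ are exactly the ones busy with level~$i$ work, and no double counting occurs across levels. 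Taking expectations under the product-form equilibrium $\bpi$ of Eq.~\eqref{eqn:EqDistApprox}, the events $\{Y_i > 0\}$ and $\{Y_j = 0, j > i\}$ are independent across coordinates, so $\E[\Indicator{A_i}] = \bpi_i(\{>0\})\prod_{j=i+1}^{k-1}\bpi_j(0) = \rho_i \prod_{j=i+1}^{k-1}(1-\rho_j)$. I would then either re-index this telescoping product or, more directly, use the alternative decomposition $M(t) = \sum_{i=0}^{k-1} \Indicator{Y_i(t) > 0}\cdot(\text{servers dedicated to stage } i)$ in the unpooled picture, giving the cleaner form $\bar M = \sum_{i=0}^{k-1}(n-i)\,\rho_i\prod_{j=0}^{i-1}(1-\rho_j)$ after recognizing $\prod_{j=0}^{i-1}\bpi_j(0)$ as the probability that a query with $i$ responses is currently "at the front." Finally, substituting $\rho_i = \frac{\lambda}{\bar N_i \bar\mu}$ and $\bpi_j(0) = 1 - \frac{\lambda}{\bar N_j\bar\mu}$ yields the stated expression.

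The main obstacle is getting the combinatorics of "active" right — in particular, justifying why the coefficient is $n-i$ (the total eligible server count for stage~$i$) and arguing that summing $(n-i)\rho_i\prod_{j<i}(1-\rho_j)$ over $i$ counts each busy server exactly once without overlap or omission. This requires carefully invoking Remark~\ref{rem:N-Y_Relation} and the pooling structure: a server eligible for stage~$i$ but not needed there (because stage~$i$ is empty) gets pooled down to stage $i-1$, so the indicator weights must be arranged so that the "last nonempty stage below which a server sits" partitions the active servers. I would make this precise by an induction on $k$ or by directly identifying, for each server, the unique stage it is currently serving and showing the resulting sum matches. The service-rate substitution and the algebraic simplification to the final formula are then routine.
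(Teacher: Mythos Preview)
Your derivation of $\bar N_i$ is correct and matches the paper exactly: substitute $\bpi_{i+1}(0)=1-\lambda/(\bar N_{i+1}\bar\mu)$ into the recursion, telescope to $\bar N_i=1+\bar N_{i+1}-\lambda/\bar\mu$, and unroll.

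The second stage, however, has a directional error. Your event $A_i=\{Y_i>0,\ Y_{i+1}=\dots=Y_{k-1}=0\}$ identifies the \emph{largest} nonempty stage, but the number of active servers is governed by the \emph{smallest} one. Concretely: take $k\ge 2$, $Y_0>0$, $Y_1>0$, and $Y_j=0$ for $j\ge 2$. Then only $A_1$ holds, so your sum records $n-1$ active servers; but the query in stage~$0$ has been served by nobody, so all $n$ servers are busy and $M(t)=n$. The proposed ``re-indexing'' cannot repair this, since $\sum_i(n-i)\rho_i\prod_{j>i}(1-\rho_j)$ and $\sum_i(n-i)\rho_i\prod_{j<i}(1-\rho_j)$ are genuinely different quantities once the weights $(n-i)$ vary with $i$.

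The paper's argument is the clean one-line observation you are circling around: a server is idle precisely when it has finished serving every query in the system, and by FCFS this happens exactly when it has served the \emph{newest} query present. The newest query sits in the lowest nonempty stage, say stage $i$ (so $Y_0=\dots=Y_{i-1}=0$, $Y_i>0$), and has been served by exactly $i$ servers; those $i$ servers are idle and the remaining $n-i$ are busy. Hence $M(t)=n-\inf\{i\le k-1:Y_i(t)>0\}$, the events $\{Y_i>0,\ Y_j=0\text{ for }j<i\}$ partition $\{M>0\}$, and under the product form $\bpi$ one reads off $p_i=\rho_i\prod_{j=0}^{i-1}(1-\rho_j)$ immediately. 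This yields the stated $\bar M$ without any further combinatorial bookkeeping.
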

\begin{proof}
Note that the queries are served in an FCFS manner and there can be an idle server at time $t$ in the given system if and only if  $Y_j(t) = 0$ for all $j \le i$ for some $i \in \set{0, \dots, k-1}$.
Furthermore, if $(Y_j(t): j \le i)$ is zero it implies that all queries in the system has received atleast $i$ responses and hence at least $i$ servers are idle. 
%
On the other hand, non-zero $Y_i(t)$ indicates that there is at least one query that needs service from $n-i$ servers which implies that there are $n-i$ non-empty servers in the system. 
Thus, the number of idle servers $n-M(t)$ at any time $t$ is given by the first non-zero entry of $Y(t)$. 
That is, 
\EQ{
M(t) = n- \inf\set{i \le k-1: Y_i(t) > 0}.
}
We can define the equilibrium distribution of number of active servers as $p_i \triangleq  \lim_{t\to \infty}P\set{M(t) = n-i}$, 
where 
\EQ{
p_i = \lim_{t\to \infty}P\set{Y_i(t) > 0, \max_{j < i}Y_j(t)=0}.
}
Hence, we get 
\EQ{
\bar{M} = \sum_{i=0}^{k-1}(n-i)p_i.
}
Now, we provide an approximation for $p_i$ as follows.
Approximating the $(n,k)$ fork-join system evolution by the reversible Markov process $\bY(t)$, 
we get
\EQ{
p_i\approx  \lim_{t\to \infty}P\set{\bY_i(t) > 0, \bY_j(t) = 0, j < i}.
}
From the independence of tandem queues for the approximating Markov process $\bY(t)$ we get \EQ{
p_i = (1-\bpi_i(0))\prod_{j=0}^{i-1} \bpi_j(0),\text{ for }i \in \set{0, \dots, k-1}
}
where $\bpi_i(0)$ is provided in Remark~\ref{rem:ApproxMarkovProcess}.
Finally, we compute $\bN_i$ as follows.
From Eq.~\eqref{eqn:ApproxNumServers} we obtain, 
\EQ{
\bN_i = \E_{\bpi}N_i(y) = 
\begin{cases}
n-k+1, & i = k-1,\\
1 + \bN_{i+1}\bpi_{i+1}(0), & i < k-1.
\end{cases}
}
From Remark~\ref{rem:ApproxMarkovProcess}, we have $\pi_{i+1}(0) = 1 - (\lambda/\bN_{i+1}\bmu)$ for $i < k-1$.  
Substituting this in above equation, we get 
\EQ{
\bN_i = 
\begin{cases}
n-k+1, & i = k-1,\\
1 + \bN_{i+1} - \frac{\lambda}{\bmu}, & i < k-1.
\end{cases}
}
The result follows by recursive evaluation of the above equation from $i=k-1$ to $0$. 
\end{proof}
\begin{rem}
\label{rem:NumHighServers} 
We observe that $p_0 + p_1 + \dots + p_{k-1} = 1 -p_n$ where $p_n = \bpi_0(0) \dots \bpi_{k-1}(0)$. 
Using this relation, we can write the number of active servers as 
$
\bar{M} 
= (n-k)(1-p_n) + \sum_{j=0}^{k-1}(1-\bpi_0(0)\dots \bpi_j(0)). 
$
Recall that the probability $\bpi_i(0) = 1 - \frac{\lambda}{\bN_i\bmu(p)}$.
We observe that the aggregate service rate $\bN_i\bmu(p)$ increases with the probability of high rate selection $p$
which leads to decrease in each $\bpi_i(0)$ for $i \in \set{0, \dots, k-1}$. 
Hence, the mean number of active servers $\bar{M}$ decreases. 
But, the mean number of active high rate servers $p\bar{M}$ may increase with $p$, if the increase in $p$ dominates the decrease in $\bar{M}$.
\end{rem}

\subsection{Computation of performance metrics}
Here, we will derive closed form expressions for mean number of queries and mean power consumption in the approximate $(n,k)$ fork-join system. 
From Little's Law~\cite[Chapter 3]{Ross1995Wiley}, the mean sojourn time in a queue is proportional to the mean number of queries at steady state. 
Therefore, we first compute the mean number of queries in the system. 
\begin{prop}[Mean number of queries] 
The mean number of queries in the approximate $(n,k)$ fork-join system is 
\EQ{
\bar{R} = \E_{\bpi}\sum_{i=0}^{k-1}\bY_i(t) = \sum_{i=0}^{k-1} \frac{\lambda}{(k-i) \left( \frac{n-i}{k-i} \bmu  - \lambda \right)}.
}
\end{prop}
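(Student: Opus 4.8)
The plan is to read off the answer from the product-form equilibrium distribution in Remark~\ref{rem:ApproxMarkovProcess} and the explicit formula for $\bN_i$ derived inside the proof of Lemma~\ref{lem:MeanActiveServers}. From Eq.~\eqref{eqn:EqDistApprox} we have $\bpi(y) = \prod_{i=0}^{k-1}(1-\rho_i)\rho_i^{y_i}$ with $\rho_i = \lambda/(\bN_i\bmu)$, so the coordinates $\bY_0,\dots,\bY_{k-1}$ are independent under $\bpi$ and each $\bY_i$ is geometrically distributed on $\Z_+$ with parameter $\rho_i$. Hence $\E_{\bpi}\bY_i = \rho_i/(1-\rho_i)$, and linearity of expectation gives $\bar R = \E_{\bpi}\sum_{i=0}^{k-1}\bY_i = \sum_{i=0}^{k-1}\rho_i/(1-\rho_i)$.

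Next I would simplify each summand and substitute the closed form for $\bN_i$. Writing $\rho_i/(1-\rho_i) = \lambda/(\bN_i\bmu - \lambda)$ reduces everything to plugging in $\bN_i = (n-i) - (k-1-i)\lambda/\bmu$, which Lemma~\ref{lem:MeanActiveServers} obtained by unrolling the recursion in Eq.~\eqref{eqn:ApproxNumServers}. Multiplying by $\bmu$ gives $\bN_i\bmu = (n-i)\bmu - (k-1-i)\lambda$, so $\bN_i\bmu - \lambda = (n-i)\bmu - (k-i)\lambda = (k-i)\bigl(\tfrac{n-i}{k-i}\bmu - \lambda\bigr)$. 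Substituting back into the sum yields exactly $\bar R = \sum_{i=0}^{k-1} \lambda\big/\bigl[(k-i)\bigl(\tfrac{n-i}{k-i}\bmu - \lambda\bigr)\bigr]$.

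This argument is essentially bookkeeping, so I do not expect a genuine obstacle; the only two points needing care are (i) the geometric-mean step is valid precisely when $\rho_i<1$ for every $i$, i.e.\ under the stability condition $\lambda < \bN_i\bmu$, which is also exactly the regime in which the right-hand side is finite and the approximating chain $\bY$ is positive recurrent, and (ii) one should use the already-linearized expression for $\bN_i$ from Lemma~\ref{lem:MeanActiveServers} rather than re-deriving it from the nested recursion \eqref{eqn:ApproxNumServers}. With those caveats, the closed form follows immediately from the independence and the geometric marginals.
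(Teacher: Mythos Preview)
Your proposal is correct and follows essentially the same approach as the paper: both use the product-form equilibrium distribution (independent geometric marginals, equivalently the $M/M/1$ mean $\lambda/(\bN_i\bmu-\lambda)$ at each level), substitute the closed form $\bN_i=(n-i)-(k-1-i)\lambda/\bmu$ from Lemma~\ref{lem:MeanActiveServers}, and sum. Your write-up is slightly more explicit about the algebra $\bN_i\bmu-\lambda=(k-i)\bigl(\tfrac{n-i}{k-i}\bmu-\lambda\bigr)$ and about the stability caveat $\rho_i<1$, but there is no methodological difference.
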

\begin{proof} 
For the reversible process $\bY$, 
the number of queries at each level is independent with marginal distribution $\bpi_i$ at level~$i$. 
Further, each level~$i$ is an $M/M/1$ queue with Poisson arrival rate $\lambda$ and exponential service rate $\bN_i\bmu$, 
and hence the mean number of queries at level~$i$ is given by 
\EQ{
\E_{\bpi_i}\bY_i(t) = \frac{\lambda}{\bN_i\bmu - \lambda}. 
}
Substituting the value of parameter $\bN_i$ from Lemma~\ref{lem:MeanActiveServers} in the above equation, 
and summing the mean number of queries at all levels $i \in \set{0, \dots, k-1}$, 
we get the result. 
\end{proof}
Little's law gives us the mean sojourn time from the above Proposition as stated next.
\begin{cor}
The mean sojourn of the queries in the approximate $(n,k)$ fork-join system is 
\EQ{
\frac{\bar{R}}{\lambda} = \sum_{i=0}^{k-1} \frac{1}{ \left(  (n-i)\bmu- (k-i)\lambda \right)}.
} 
\end{cor}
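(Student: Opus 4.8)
The statement is an immediate consequence of Little's law combined with the closed form for $\bar R$ established in the preceding Proposition, so the plan is short. First I would invoke Little's law~\cite[Chapter 3]{Ross1995Wiley} for the whole approximate $(n,k)$ fork-join system: since queries arrive to the system as a Poisson process of rate $\lambda$ and the limiting mean number of queries in the approximating reversible chain $\bY$ is $\bar R$, the limiting mean sojourn time of a query equals $\bar R/\lambda$. This identity is already recorded in Section~\ref{sec:model} after Eq.~\eqref{eqn:MeanNumQueries}.

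Next I would simply divide the expression
\EQ{
\bar R = \sum_{i=0}^{k-1} \frac{\lambda}{(k-i)\left(\frac{n-i}{k-i}\bmu - \lambda\right)}
}
from the Proposition by $\lambda$. The factor $\lambda$ in each numerator cancels, leaving $\sum_{i=0}^{k-1}\big[(k-i)\big(\tfrac{n-i}{k-i}\bmu-\lambda\big)\big]^{-1}$, and then I would distribute the factor $(k-i)$ through the parenthesis, using $(k-i)\cdot\tfrac{n-i}{k-i}\bmu=(n-i)\bmu$, to obtain the denominator $(n-i)\bmu-(k-i)\lambda$. This yields exactly the claimed formula.

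There is essentially no obstacle here; the only point worth a line is to note that each summand is well defined and positive precisely because the stability condition $\rho_i=\lambda/(\bar N_i\bmu)<1$ holds at every level, which by the value $\bar N_i=(n-i)-(k-1-i)\lambda/\bmu$ from Lemma~\ref{lem:MeanActiveServers} is equivalent to $(n-i)\bmu-(k-i)\lambda>0$ for all $i\in\set{0,\dots,k-1}$; this is the same condition under which the Proposition's expression for $\bar R$ is finite, so nothing new is required. Hence the corollary follows.
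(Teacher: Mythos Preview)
Your proposal is correct and is exactly the approach the paper takes: the paper simply states that the corollary follows from Little's law applied to the preceding Proposition, and your writeup spells out the one-line division by $\lambda$ and the trivial algebraic simplification that the paper leaves implicit. Nothing more is needed.
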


\begin{rem}
We observe that the system is stable if $(k-i) \lambda < (n-i)\bmu$ for all $i \in \set{0, \dots, k-1}$.  
This is expected since when a query has response from $i$ servers, 
it still needs $(k-i)$ responses, and there are $(n-i)$ servers available to serve it. 
Recall that the mean service time at each server is $1/\bmu$. 

We further observe that the mean sojourn time of a query decreases with increase in mean service rate $\bmu$ whereas it increases with the arrival rate $\lambda$. 
Keeping $n/k$ fixed and taking large $k$ and $n$, we can approximate the mean sojourn time of a query by 
\EQ{
\frac{\bar{R}}{\lambda} \approx \frac{1}{\bmu-\lambda}\left[\ln\Big(1 - \frac{k\lambda}{n\bmu}\Big) - \ln\Big(1-\frac{k}{n}\Big) \right]. 
}
From above equation it can be verified that for a fixed $\frac{\lambda}{\bmu}$, 
the mean sojourn time increases with $\frac{k}{n} \in [0, \min(1,\frac{\bmu}{\lambda}))$. 
\end{rem}
We now proceed to compute the mean power consumption in the approximate $(n,k)$ fork-join system. 
From Eq.~\eqref{eqn:MeanPower}, 
we know that mean power consumption can be written in terms of the mean number of active servers $\bar{M}$, 
and the mean number of high-rate servers $\bar{M}_h$. 
\begin{prop}
The mean power consumption in the approximate $(n,k)$ fork-join system with independent probabilistic slowdown is  
\EQ{
\bar{P} =\bar{M}(P_0+(P_1-P_0)p),
}
where $\bar{M}$ is the mean number of active servers in the system approximated as in Lemma~\ref{lem:MeanActiveServers}, and $p$ is the probability of selection of high rate. 
\end{prop}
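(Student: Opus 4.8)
The plan is to reduce the statement to the single identity $\bar{M}_h = p\,\bar{M}$ between the mean number of high-rate active servers and the mean number of active servers, and then quote Eq.~\eqref{eqn:MeanPower}. That equation already gives $\bar{P} = P_0\bar{M} + (P_1-P_0)\bar{M}_h$, so substituting $\bar{M}_h = p\,\bar{M}$ yields $\bar{P} = P_0\bar{M} + (P_1-P_0)p\,\bar{M} = \bar{M}\bigl(P_0 + (P_1-P_0)p\bigr)$, which is exactly the asserted formula once $\bar{M}$ is taken to be the approximation of Lemma~\ref{lem:MeanActiveServers}. Thus the entire content is the proof of $\bar{M}_h = p\,\bar{M}$ inside the approximating model.

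To establish this I would write $M_h(t) = \sum_{j=1}^{n} A_j(t)\,\xi_j(t)$, where $A_j(t)\in\set{0,1}$ indicates that server~$j$ is active at time $t$ and $\xi_j(t)\in\set{0,1}$ indicates that the service currently in progress at server~$j$ was initiated at the high rate $\mu_1$. Under the distributed probabilistic slowdown policy the rate of each service is drawn afresh and independently with $\mathbb{P}\set{\xi_j(t)=1}=p$; and — this is the defining feature of the approximation — in the reversible model the occupancy process $\bY$ evolves only through the pooled effective rates $\bar{N}_i\bmu$ and carries no record of which individual server drew which rate, so this draw is independent of $\bY(t)$ and hence of the event $\set{A_j(t)=1}$, which (as in the proof of Lemma~\ref{lem:MeanActiveServers}) is determined by $\bY(t)$ through $M(t) = n - \inf\set{i\le k-1 : \bY_i(t) > 0}$. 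Therefore $\E[A_j(t)\xi_j(t)] = p\,\E[A_j(t)]$ for each $j$, and summing over $j$ gives $\E[M_h(t)] = p\,\E[M(t)]$ for every $t$. Averaging over $[0,T]$ and letting $T\to\infty$ — the limits exist by ergodicity of the reversible chain $\bY$ — gives $\bar{M}_h = p\,\bar{M}$, with $\bar{M}$ the quantity computed in Lemma~\ref{lem:MeanActiveServers}.

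An equivalent and perhaps cleaner phrasing conditions directly on $M(t)$: in the approximate model the rate labels of the $M(t)$ active servers are \iid Bernoulli$(p)$ and independent of $M(t)$, so $M_h(t)\mid M(t)$ is Binomial$(M(t),p)$, giving $\E[M_h(t)\mid M(t)] = p\,M(t)$; the tower property then produces $\E[M_h(t)] = p\,\E[M(t)]$, and the time-average version follows as above. Plugging $\bar{M}_h = p\,\bar{M}$ into Eq.~\eqref{eqn:MeanPower} finishes the proof.

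The one delicate point — and the only place judgement is needed — is the independence of $\xi_j(t)$ and $A_j(t)$. In the exact Markov process $Z$ of Section~\ref{sec:Analysis} this independence \emph{fails}: a server that drew the high rate clears its query faster, so it is both less likely to be found mid-service and alters how work piles up at the downstream stages, and this coupling is precisely why the exact chain is intractable. In the approximating model, however, the coupling has been severed by construction (the pooled-rate replacement discards the per-server rate information), so the independence genuinely holds there and the argument above goes through with no residual computation; the substitution into Eq.~\eqref{eqn:MeanPower} is then immediate.
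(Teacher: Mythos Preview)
Your proposal is correct and follows essentially the same route as the paper: the paper also writes $M_h(t)$ as a sum of Bernoulli indicators over the active servers, invokes $\E[\xi_j]=p$ to get $\E[M_h(t)\mid M(t)] = pM(t)$, takes expectations, and substitutes into Eq.~\eqref{eqn:MeanPower}, then defers the value of $\bar M$ to Lemma~\ref{lem:MeanActiveServers}. Your ``cleaner phrasing'' via the Binomial conditional law is exactly the paper's argument; your additional paragraph explaining why the independence of $\xi_j(t)$ and $A_j(t)$ is legitimate in the approximating model but \emph{not} in the exact chain $Z$ is a point the paper passes over without comment, so you are being more careful there rather than taking a different approach.
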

Without loss of generality, we choose to index the active servers as $1$ to $M(t)$ at time $t$. 
Then, we can write the number of servers working at high-rate $\mu_1$ as $\sum_{j=1}^{M(t)}\xi_{j}$ where 
$\xi_{j}$ indicates that the active server $j$ is working at high-rate.  
Since the mean of indicators $\xi_{ji}$ equals $p$, from linearity of expectation we obtain that the conditional mean of number of high-rate servers in the $ (n,k)$-fork join system as 
\EQ{
\E[M_h(t)| M(t)] = M(t)p,
}
where $p$ is the selection probability of high rate.  
The result follows from taking expectation of conditional mean of high rate servers in the above equation, 
and substituting in Eq.~\eqref{eqn:MeanPower} for the mean power consumption. 
We can approximate the number of active servers $\bar{M}$, as in Lemma~\ref{lem:MeanActiveServers}.
\begin{rem} 
From the above proposition, 
we observe that the mean power consumption is linearly increasing in the mean number of active servers, 
and the probability of high rate selection $p$. 
However, the mean number of active servers depends on the parameter $p$, as it governs the mean service rate $\bmu$.  
For fixed arrival and service rates $\lambda, \mu_0, \mu_1$, 
when $p$ increases, 
the likelihood of servers working at the higher rate increases, 
and the mean number of active server reduces.  
\end{rem}

\section{Numerical studies}
\label{sec:Numerical} 
The set of arrival rates $\lambda$ such that the query response time is stable, 
is called the \emph{stability region}. 
From the equilibrium distribution of the approximate system, 
we observe that the system is stable when each queue in tandem is stable, 
and we can write the stability region as $\cap_{i=0}^{K-1}\set{\frac{\lambda}{\bar{\mu}} < \bar{N}_i}$. 
Using Eq.~\eqref{eqn:ApproxServiceRate} for mean service rate $\bmu$ and parameter $\bN_i$ from Lemma~\ref{lem:MeanActiveServers}, 
we can equivalently write the stability region as 
\EQ{
\lambda < \min_i\frac{n-i}{k-i} \bar{\mu} = \frac{n}{k}\bmu.
}
We numerically evaluate the proposed approximation for random server slowdown for all ranges of arrival rates for which the tandem queueing system remains stable. 
To this end, we simulate $(n,k)$ fork-join system for total number of servers $n=20$ and number of sufficient responses $k = 18$. 
This corresponds to $k/n = 0.9$ or $90$th percentile of all responses. 
In a typical query response system, the query is sent to thousands of servers\footnote{https://www.computerweekly.com/news/2240088495/Single-Google-search-uses-1000-servers, 20 Feb 2009, viewed 21 May 2021.}.
{The implementation of our query response system is computationally demanding due to the overheads involved in keeping track of the query status at each server until the completion of $90$th percentile of responses. 
Due to computing limitations,
we provide the simulations only for a small number of servers. 
However, we would like to add that the proposed approximations get better with increasing number of servers and the plots show that the proposed approximation remains tight starting from number of servers as low as $20$. 
Therefore, the proposed approximation can be used to select the right operating point for large system case, 
when it is not feasible to simulate the system. 
Note that since we provide closed form expressions for the performance metrics under the approximation, 
the computational complexity involved in obtaining the tradeoff points via approximation is negligible. 
}

All $n$ servers are assumed to have independent and memoryless query response times, 
with high and low service rates being $\mu_1=\frac{k}{n}$ and $\mu_0 = 0.6\frac{k}{n}$ respectively. 
This choice of scaled service rates results in mean service rate 
\EQN{
\bmu(p) 
= \frac{k}{n}\left(\frac{0.6p(1-p)}{p+0.6(1-p)}\right),
} 
for probability $p$ of selection of high rate at each server.  
This choice ensures that the maximum stability region is $\lambda < 1$ when all servers are chosen to run at high rate (when $p=1$). 
We compute the following empirical time averages: 
(a) number of queries,  (b) number of active servers, and (c) number of active high rate servers, in the system. 
We empirically verified that the empirical average of number of active high rate servers is $p$ times the empirical average of number of active servers. 
Hence, we only show the empirical average of number of active high rate servers in the following performance plots. 
 
We first plot the mean number of active high rate servers $\bar{M}_h$ and the mean number of queries $\bar{R}$ as a function of arrival rate $\lambda \in [0, n\bmu(p)/k)$ in Fig.~\ref{fig:MeanNumHighRateServersVsArrivalRateWithFixedp} and Fig.~\ref{fig:MeanNumQueriesVsLambda} respectively, 
for two values of probability of high rate selection $p \in \set{0.3, 0.6}$. 
We observe that the performance curves obtained empirically remain very close to the one computed analytically under the proposed approximation, in both these cases. 
As more queries arrive, the number of active servers and hence the number of active high rate servers are expected to increase.
This is verified in Fig.~\ref{fig:MeanNumHighRateServersVsArrivalRateWithFixedp}.
For a fixed probability of high rate server selection $p$, 
both the system load and hence the number of queries in the system increase with the arrival rate. 
We observe this in Fig.~\ref{fig:MeanNumQueriesVsLambda}. 
In addition, we observe that the stability region is larger for larger value of $p$. 
This is expected since both the mean service rate $\bmu(p)$ and hence the stability region $\set{\lambda < n\bmu(p)/k}$, 
increase with $p$. 


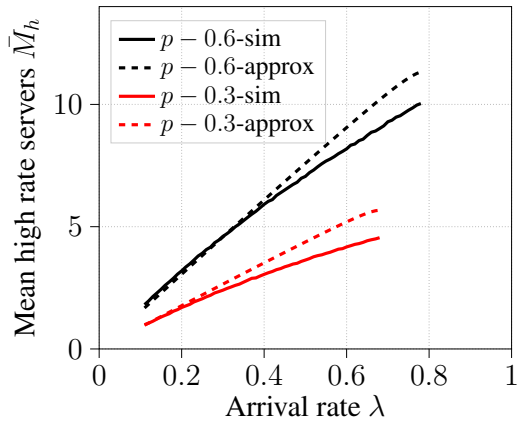
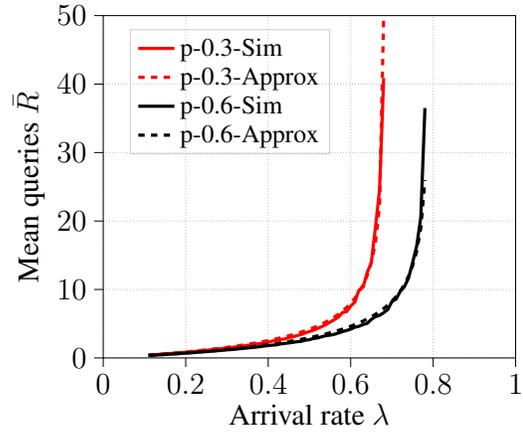
\begin{figure}[tbh]
\begin{subfigure}[t]{0.5\textwidth}
\centerline{\scalebox{0.8}{
\begin{tikzpicture}

\definecolor{color0}{rgb}{0.5,0,1}

\begin{axis}[
font=\Large,
legend cell align={left},
legend style={fill opacity=0.1, draw opacity=1, text opacity=1, draw=white!80.0!black, font=\large, at={(0.55,0.96)}},
tick align=outside,
tick pos=left,
x grid style={white!69.01960784313725!black, densely dotted},
x label style={at={(axis description cs:0.5,-0.02)},anchor=north},
xlabel={Arrival rate $\lambda$},
xmajorgrids,
xmin=0, xmax=1,
xtick style={color=black},
y grid style={white!69.01960784313725!black, densely dotted},
ylabel={Mean high rate servers $\bar{M}_h$},
ymajorgrids,
ymin=0, ymax=14,
ytick style={color=black}
]
\addplot [semithick, black, mark=, mark size=1.5, mark options={solid}, line width=1.5pt]
table {%
0.11 1.81217187828121
0.12 1.96257037429625
0.13 2.13943860561397
0.14 2.28624713752861
0.15 2.43636563634363
0.16 2.60324396756035
0.17 2.74158258417414
0.18 2.88986110138897
0.19 3.05844941550584
0.2 3.19188808111918
0.21 3.35088649113507
0.22 3.4819451805482
0.23 3.63911360886391
0.24 3.80150198498014
0.25 3.9130408695913
0.26 4.06514934850655
0.27 4.16179838201612
0.28 4.34360656393434
0.29 4.4505354946451
0.3 4.60232397676017
0.31 4.71548284517156
0.32 4.84320156798434
0.33 4.99533004669954
0.34 5.11020889791103
0.35 5.24618753812466
0.36 5.38051619483808
0.37 5.5100748992511
0.38 5.62782372176279
0.39 5.77126228737709
0.4 5.91434085659143
0.41 6.0352896471035
0.42 6.10662893371069
0.43 6.27793722062781
0.44 6.38953610463893
0.45 6.50074499255002
0.46 6.61435385646144
0.47 6.72679273207267
0.48 6.81593184068161
0.49 6.97954020459793
0.5 7.05563944360556
0.51 7.21059789402104
0.52 7.31671683283168
0.53 7.45223547764523
0.54 7.5750142498575
0.55 7.68333316666833
0.56 7.76553234467655
0.57 7.86388136118638
0.58 7.98306016939829
0.59 8.07756922430764
0.6 8.18215817841818
0.61 8.33364666353338
0.62 8.38115618843809
0.63 8.51929480705194
0.64 8.57885421145779
0.65 8.73194268057294
0.66 8.8618613813863
0.67 8.91137088629128
0.68 9.01470985290143
0.69 9.12253877461238
0.7 9.290467095329
0.71 9.35384646153553
0.72 9.47933520664805
0.73 9.53722462775367
0.74 9.65014349856491
0.75 9.76104238957608
0.76 9.85529144708563
0.77 9.97068029319698
0.78 10.0359296407035
};
\addlegendentry{$p-0.6$-sim}
\addplot [semithick, black, dashed, mark=, mark size=1.5, mark options={solid}, line width=1.5pt]
table {%
0.11 1.67683846788879
0.12 1.82988645207345
0.13 1.98297843781058
0.14 2.13610193127949
0.15 2.28924467628385
0.16 2.4423946629431
0.17 2.59554013634536
0.18 2.7486696051225
0.19 2.90177184990203
0.2 3.05483593158376
0.21 3.20785119938171
0.22 3.36080729856306
0.23 3.51369417780625
0.24 3.66650209608909
0.25 3.81922162900507
0.26 3.97184367439125
0.27 4.12435945713481
0.28 4.2767605330058
0.29 4.42903879134158
0.3 4.58118645638343
0.31 4.73319608703571
0.32 4.88506057478468
0.33 5.03677313947393
0.34 5.18832732258775
0.35 5.3397169776397
0.36 5.49093625720079
0.37 5.64197959602738
0.38 5.79284168966153
0.39 5.94351746777259
0.4 6.09400206138573
0.41 6.24429076299538
0.42 6.39437897838455
0.43 6.54426216875772
0.44 6.69393578153565
0.45 6.84339516784531
0.46 6.99263548435083
0.47 7.14165157659451
0.48 7.29043784042474
0.49 7.4389880573482
0.5 7.58729519871301
0.51 7.73535119245157
0.52 7.88314664460608
0.53 8.03067050592384
0.54 8.17790967129472
0.55 8.32484849651178
0.56 8.47146821248313
0.57 8.61774621121277
0.58 8.76365517002698
0.59 8.90916196982449
0.6 9.05422634835643
0.61 9.19879920888432
0.62 9.3428204752813
0.63 9.48621634252089
0.64 9.62889570994848
0.65 9.77074549324036
0.66 9.9116243724013
0.67 10.0513543189982
0.68 10.1897089072963
0.69 10.3263968653051
0.7 10.4610384074243
0.71 10.5931303182267
0.72 10.7219929557999
0.73 10.8466871448799
0.74 10.9658788173279
0.75 11.0776084636769
0.76 11.1788767836397
0.77 11.2648492025971
0.78 11.3271958941609
};
\addlegendentry{$p-0.6$-approx}

\addplot [semithick, red, mark=, mark size=1.5, mark options={solid}, line width=1.5pt]
table {%
0.11 0.991670083299167
0.12 1.05696943030569
0.13 1.13143868561313
0.14 1.22384776152237
0.15 1.30468695313044
0.16 1.37657623423764
0.17 1.44226557734421
0.18 1.54319456805433
0.19 1.60651393486064
0.2 1.69590304096959
0.21 1.76058239417605
0.22 1.82478175218247
0.23 1.92419075809241
0.24 1.98757012429875
0.25 2.05249947500523
0.26 2.13184868151319
0.27 2.19498805011949
0.28 2.28975710242897
0.29 2.34471655283445
0.3 2.40287597124029
0.31 2.47473525264745
0.32 2.53462465375342
0.33 2.59977400225997
0.34 2.64959350406497
0.35 2.73033269667305
0.36 2.80800191998082
0.37 2.88839111608882
0.38 2.90444095559043
0.39 2.98170018299817
0.4 3.05033949660503
0.41 3.11482885171148
0.42 3.16830831691683
0.43 3.24216757832421
0.44 3.28655713442863
0.45 3.35875641243588
0.46 3.40849591504083
0.47 3.47176528234715
0.48 3.5030149698503
0.49 3.57971420285797
0.5 3.63555364446353
0.51 3.69487305126948
0.52 3.76967230327696
0.53 3.81640183598164
0.54 3.85028149718503
0.55 3.90844091559084
0.56 3.95381046189538
0.57 4.0164098359017
0.58 4.0797792022079
0.59 4.10193898061021
0.6 4.17238827611725
0.61 4.2223877761222
0.62 4.27685723142766
0.63 4.30783692163079
0.64 4.35831641683581
0.65 4.42452575474241
0.66 4.46573534264655
0.67 4.49995500045
0.68 4.54964450355498
};
\addlegendentry{$p-0.3$-sim}

\addplot [semithick, red, dashed, mark=, mark size=1.5, mark options={solid}, line width=1.5pt]
table {%
0.11 0.971343756246333
0.12 1.0599912757914
0.13 1.14865256264308
0.14 1.23731828982326
0.15 1.32597935991454
0.16 1.41462691396656
0.17 1.50325234024089
0.18 1.59184728271848
0.19 1.68040364928153
0.2 1.76891361946629
0.21 1.85736965166622
0.22 1.94576448964455
0.23 2.03409116819178
0.24 2.12234301773595
0.25 2.21051366768108
0.26 2.29859704821103
0.27 2.38658739025061
0.28 2.47447922322291
0.29 2.56226737017748
0.3 2.64994693978899
0.31 2.73751331463439
0.32 2.82496213504781
0.33 2.91228927771965
0.34 2.99949082804595
0.35 3.08656304503734
0.36 3.17350231735557
0.37 3.26030510874662
0.38 3.34696789076782
0.39 3.43348706023972
0.4 3.51985883826494
0.41 3.60607914690621
0.42 3.69214345865325
0.43 3.77804661256181
0.44 3.8637825893181
0.45 3.94934423532976
0.46 4.03472292307262
0.47 4.11990813104638
0.48 4.20488692139978
0.49 4.28964328596501
0.5 4.37415732117921
0.51 4.45840417777616
0.52 4.54235271005018
0.53 4.62596371853226
0.54 4.70918763361538
0.55 4.7919614170474
0.56 4.87420434816891
0.57 4.95581218620065
0.58 5.03664891231873
0.59 5.11653477027664
0.6 5.1952284790151
0.61 5.2723999610739
0.62 5.34758704095211
0.63 5.42012382825608
0.64 5.48901640735755
0.65 5.55271410043662
0.66 5.60865706638789
0.67 5.65229508964599
0.68 5.67468128153597
};
\addlegendentry{$p-0.3$-approx}
\end{axis}

\end{tikzpicture}}}
\subcaption{Mean number of high rate servers $\bar{M}_h$.}
\label{fig:MeanNumHighRateServersVsArrivalRateWithFixedp}
\end{subfigure}%
\begin{subfigure}[t]{0.5\textwidth}
\centerline{\scalebox{0.8}{
\begin{tikzpicture}

\definecolor{color0}{rgb}{0.12156862745098,0.466666666666667,0.705882352941177}
\definecolor{color1}{rgb}{1,0.498039215686275,0.0549019607843137}
\definecolor{color2}{rgb}{0.172549019607843,0.627450980392157,0.172549019607843}
\definecolor{color3}{rgb}{0.83921568627451,0.152941176470588,0.156862745098039}
\definecolor{color4}{rgb}{0.580392156862745,0.403921568627451,0.741176470588235}
\definecolor{color5}{rgb}{0.549019607843137,0.337254901960784,0.294117647058824}
\definecolor{color6}{rgb}{0.890196078431372,0.466666666666667,0.76078431372549}
\definecolor{color7}{rgb}{0.737254901960784,0.741176470588235,0.133333333333333}
\definecolor{color8}{rgb}{0.0901960784313725,0.745098039215686,0.811764705882353}

\begin{axis}[
font=\Large,
legend cell align={left},
legend style={fill opacity=0.1, draw opacity=1, text opacity=1, draw=white!80.0!black, font=\large, at={(0.55,0.96)}},
tick align=outside,
tick pos=left,
x grid style={white!69.01960784313725!black, densely dotted},
x label style={at={(axis description cs:0.5,-0.02)},anchor=north},
xlabel={Arrival rate $\lambda$},
xmajorgrids,
xmin=0, xmax=1,
xtick style={color=black},
y grid style={white!69.01960784313725!black, densely dotted},
ylabel={Mean queries $\bar{R}$},
ymajorgrids,
ymin=0, ymax=50,
ytick style={color=black}
]
\addplot [semithick, red, mark=, mark size=1.5, mark options={solid}, line width=1.5pt]
table {%
0.11 0.424145758542413
0.12 0.462945370546295
0.13 0.500624993750065
0.14 0.554274457255423
0.15 0.594174058259421
0.16 0.641783582164178
0.17 0.679083209167904
0.18 0.745422545774541
0.19 0.784252157478416
0.2 0.841111588884108
0.21 0.890571094289058
0.22 0.94372056279437
0.23 1.01499985000147
0.24 1.06604933950661
0.25 1.12004879951201
0.26 1.17531824681752
0.27 1.25144748552515
0.28 1.32045679543203
0.29 1.37789622103777
0.3 1.44646553534464
0.31 1.51576484235155
0.32 1.60123398766012
0.33 1.66318336816631
0.34 1.73629263707362
0.35 1.8420415795842
0.36 1.92985070149298
0.37 2.04458955410443
0.38 2.08654913450864
0.39 2.2240477595224
0.4 2.31807681923179
0.41 2.45953540464594
0.42 2.58054419455808
0.43 2.70206297937019
0.44 2.76983230167698
0.45 2.96637033629664
0.46 3.08341916580831
0.47 3.27066729332707
0.48 3.41802581974178
0.49 3.65762342376576
0.5 3.86586134138658
0.51 4.05725942740569
0.52 4.32862671373295
0.53 4.56969430305695
0.54 4.87134128658721
0.55 5.23427765722349
0.56 5.57938420615798
0.57 5.95849041509584
0.58 6.62832371676285
0.59 6.76015239847597
0.6 7.57958420415797
0.61 8.15109848901514
0.62 9.89769102308975
0.63 10.3418865811341
0.64 12.4674253257468
0.65 13.9184408155918
0.66 19.0619493805062
0.67 24.569064309357
0.68 40.8825711742886
};
\addlegendentry{p-0.3-Sim}

\addplot [semithick, red, dashed, mark=, mark size=1.5, mark options={solid}, line width=1.5pt]
table {%
0.11 0.427985686434393
0.12 0.47290901794032
0.13 0.519021423263566
0.14 0.566375206395754
0.15 0.615025964010093
0.16 0.665032858324826
0.17 0.716458918239348
0.18 0.769371372282823
0.19 0.823842017437909
0.2 0.879947628514465
0.21 0.93777041346727
0.22 0.997398520899328
0.23 1.05892660699431
0.24 1.1224564703105
0.25 1.18809776428379
0.26 1.25596879897894
0.27 1.32619744565731
0.28 1.39892216017369
0.29 1.47429314417232
0.3 1.55247366664484
0.31 1.63364157280068
0.32 1.71799101258156
0.33 1.80573442778929
0.34 1.89710484502483
0.35 1.99235853189806
0.36 2.09177808684105
0.37 2.19567604911055
0.38 2.30439913622427
0.39 2.41833324252348
0.4 2.53790936667037
0.41 2.66361068025521
0.42 2.79598100788228
0.43 2.93563506614314
0.44 3.08327091188363
0.45 3.23968518934227
0.46 3.40579195594504
0.47 3.58264612972603
0.48 3.77147297040255
0.49 3.97370553123292
0.5 4.19103277784424
0.51 4.42546218671978
0.52 4.67940231004687
0.53 4.95577335729193
0.54 5.25815786373107
0.55 5.59100998823302
0.56 5.95995271843412
0.57 6.37221068541686
0.58 6.83725916684734
0.59 7.36783125020957
0.6 7.98154607262969
0.61 8.70367510271726
0.62 9.57214064812061
0.63 10.6472882838046
0.64 12.0331033133224
0.65 13.9305937752652
0.66 16.8063886266318
0.67 22.2014011431288
0.68 50.0835170483881
};
\addlegendentry{p-0.3-Approx}

\addplot [semithick, black, mark=, mark size=1.5, mark options={solid}, line width=1.5pt]
table {%
0.11 0.353576464235359
0.12 0.385306146938529
0.13 0.425795742042579
0.14 0.463465365346347
0.15 0.501484985150143
0.16 0.541404585954137
0.17 0.574304256957428
0.18 0.6180138198618
0.19 0.662953370466306
0.2 0.699143008569921
0.21 0.740902590974092
0.22 0.784152158478419
0.23 0.835941640583593
0.24 0.886051139488605
0.25 0.925960740392597
0.26 0.974130258697412
0.27 1.00734992650071
0.28 1.07600923990758
0.29 1.12405875941241
0.3 1.17512824871751
0.31 1.22105778942211
0.32 1.27706722932771
0.33 1.33675663243368
0.34 1.39721602783972
0.35 1.4639253607464
0.36 1.51091489085109
0.37 1.57514424855751
0.38 1.64782352176478
0.39 1.73135268647314
0.4 1.81014189858101
0.41 1.86833131668683
0.42 1.91533084669153
0.43 2.04573954260459
0.44 2.11753882461177
0.45 2.19210807891919
0.46 2.2761372386276
0.47 2.36557634423654
0.48 2.42495575044249
0.49 2.59099409005908
0.5 2.67351326486733
0.51 2.84303156968429
0.52 2.92124078759212
0.53 3.1032489675103
0.54 3.24934750652495
0.55 3.33072669273307
0.56 3.44530554694451
0.57 3.63777362226376
0.58 3.8439915600844
0.59 4.009279907201
0.6 4.20609793902057
0.61 4.47858521414785
0.62 4.61990380096188
0.63 4.87981120188795
0.64 5.05307946920525
0.65 5.62581374186259
0.66 6.02599974000261
0.67 6.22955770442297
0.68 6.59720402795973
0.69 7.08246917530824
0.7 8.22826771732262
0.71 8.59806401935982
0.72 9.70790292097065
0.73 10.390036099639
0.74 11.6291337086629
0.75 13.8974110258897
0.76 16.3217367826323
0.77 20.539984600154
0.78 36.4928650713491
};
\addlegendentry{p-0.6-Sim}

\addplot [semithick, black, dashed, mark=, mark size=1.5, mark options={solid}, line width=1.5pt]
table {%
0.11 0.362723225308287
0.12 0.399994558199717
0.13 0.438093724384232
0.14 0.477051156117066
0.15 0.516898881318052
0.16 0.557670633351341
0.17 0.599401970217354
0.18 0.642130404128778
0.19 0.68589554256032
0.2 0.730739241996385
0.21 0.776705775754468
0.22 0.823842017437909
0.23 0.87219764177347
0.24 0.921825344821332
0.25 0.972781085812674
0.26 1.02512435317929
0.27 1.07891845769808
0.28 1.13423085608945
0.29 1.19113350889383
0.3 1.24970327701692
0.31 1.31002236199827
0.32 1.372178795838
0.33 1.43626698713658
0.34 1.50238833139081
0.35 1.57065189458133
0.36 1.64117518072611
0.37 1.71408499591526
0.38 1.78951842355266
0.39 1.86762392819404
0.4 1.948562608597
0.41 2.03250962452164
0.42 2.11965582661419
0.43 2.21020962459313
0.44 2.30439913622427
0.45 2.40247466859133
0.46 2.50471159442959
0.47 2.61141370043444
0.48 2.72291710233824
0.49 2.8395948443097
0.5 2.96186232941204
0.51 3.09018376556128
0.52 3.22507986054351
0.53 3.36713706419796
0.54 3.51701874149518
0.55 3.6754787749645
0.56 3.84337825030933
0.57 4.02170609196522
0.58 4.21160481079436
0.59 4.41440294168266
0.6 4.63165634209919
0.61 4.8652013835417
0.62 5.11722433932345
0.63 5.39035318747864
0.64 5.68778099607592
0.65 6.01343470872127
0.66 6.37221068541685
0.67 6.77031095247999
0.68 7.21573592666553
0.69 7.71902870145731
0.7 8.29444030811657
0.71 8.96183392830772
0.72 9.7499636902982
0.73 10.7025016332236
0.74 11.890090876025
0.75 13.4373490311459
0.76 15.5940670841044
0.77 18.9785088207294
0.78 25.9680247723392
};
\addlegendentry{p-0.6-Approx}

\end{axis}

\end{tikzpicture}}}
\subcaption{Mean number of queries $\bar{R}$.}
\label{fig:MeanNumQueriesVsLambda}
\end{subfigure}%
\caption{Performance as a function of arrival rate $\lambda$ for probability of high rate selection $p \in \set{0.3, 0.6}$.}
\end{figure}
We next plot the mean number of active high rate servers $\bar{M}_h$ and the mean number of queries $\bar{R}$ as a function of high rate selection probability $p$ in Fig.~\ref{fig:avgHighRateServersVspWithFixedLambda} and Fig.~\ref{fig:avgNumCustomersVspWithFixedLambda} respectively, 
for different values of arrival rate $\lambda \in \set{0.1, 0.2, 0.3, 0.4, 0.5}$.  
For each arrival rate $\lambda$, the range of probability $p$ is limited by the stability region $\lambda \le \frac{n}{k}\bmu(p)$.
As the probability of high rate selection $p$ increases, 
we expect the number of high rate servers to increase for a fixed arrival rate $\lambda$, and this is shown in Fig.~\ref{fig:avgHighRateServersVspWithFixedLambda}.
Further, this cause the queries to get serviced faster, 
leading to reduced mean number of queries in the system as captured in Fig.~\ref{fig:avgNumCustomersVspWithFixedLambda}. 
That is, for a given query arrival rate $\lambda$, the power consumption increases whereas the mean sojourn time decreases with increase in high rate selection probability $p$, 
and allows us to choose a tradeoff point.
\begin{figure}[tbh]
\begin{subfigure}[t]{0.5\textwidth}
\centerline{\scalebox{0.8}{
\begin{tikzpicture}

\definecolor{color0}{rgb}{0.12156862745098,0.466666666666667,0.705882352941177}
\definecolor{color1}{rgb}{1,0.498039215686275,0.0549019607843137}
\definecolor{color2}{rgb}{0.172549019607843,0.627450980392157,0.172549019607843}
\definecolor{color3}{rgb}{0.83921568627451,0.152941176470588,0.156862745098039}
\definecolor{color4}{rgb}{0.580392156862745,0.403921568627451,0.741176470588235}
\definecolor{color5}{rgb}{0.549019607843137,0.337254901960784,0.294117647058824}
\definecolor{color6}{rgb}{0.890196078431372,0.466666666666667,0.76078431372549}
\definecolor{color7}{rgb}{0.737254901960784,0.741176470588235,0.133333333333333}
\definecolor{color8}{rgb}{0.0901960784313725,0.745098039215686,0.811764705882353}

\begin{axis}[
font=\Large,
legend cell align={left},
legend style={fill opacity=0.1, draw opacity=1, text opacity=1, at={(0.03,0.97)}, anchor=north west, draw=white!80!black},
tick align=outside,
tick pos=left,
x grid style={white!69.01960784313725!black, densely dotted},
xlabel={high rate selection probability $p$},
x label style={at={(axis description cs:0.5,-0.02)},anchor=north},
xmajorgrids,
xmin=-0.05, xmax=1.05,
xtick style={color=black},
xtick={-0.2,0,0.2,0.4,0.6,0.8,1,1.2},
xticklabels={−0.2,0.0,0.2,0.4,0.6,0.8,1.0,1.2},
ylabel={Mean high rate servers $\bar{M}_h$},
ymajorgrids,
ymin=-0.517186828131705, ymax=12,
ytick style={color=black}
]
\addplot [semithick, color0, mark=, mark size=1.5, mark options={solid}, line width=1.5pt]
table {%
0 0
0.05 0.156298437015628
0.1 0.307726922730774
0.15 0.461635383646163
0.2 0.601763982360183
0.25 0.746242537574634
0.3 0.887831121688786
0.35 1.02331976680229
0.4 1.17553824461756
0.45 1.29340706592935
0.5 1.4190458095419
0.55 1.52348476515234
0.6 1.6671133288667
0.65 1.79416205837941
0.7 1.92741072589274
0.75 2.0151398486015
0.8 2.13497865021354
0.85 2.18248817511823
0.9 2.280507194928
0.95 2.40988590114098
1 2.50352496475037
};
\addlegendentry{$\lambda$-0.1}

\addplot [semithick, color1, mark=, mark size=1.5, mark options={solid}, line width=1.5pt]
table {%
0 0
0.05 0.289957100428995
0.1 0.587514124858747
0.15 0.869261307386924
0.2 1.14199858001421
0.25 1.41814581854182
0.3 1.69221307786921
0.35 1.95087049129508
0.4 2.20644793552063
0.45 2.49282507174928
0.5 2.72126278737209
0.55 2.95051049489504
0.6 3.19010809891903
0.65 3.41471585284147
0.7 3.61855381446185
0.75 3.8660713392866
0.8 4.07258927410717
0.85 4.23778762212383
0.9 4.51922480775187
0.95 4.6427635723642
1 4.77828221717777
};
\addlegendentry{$\lambda$-0.2};
\addplot [semithick, color2, mark=, mark size=1.5, mark options={solid}, line width=1.5pt]
table {%
0 0
0.05 0.414585854141458
0.1 0.821441785582143
0.15 1.23397766022339
0.2 1.62230377696222
0.25 2.00291997080029
0.3 2.40157598424014
0.35 2.79910200897988
0.4 3.17333826661735
0.45 3.55551444485555
0.5 3.86712132878671
0.55 4.2780072199278
0.6 4.60012399875998
0.65 4.9409505904941
0.7 5.29321706782931
0.75 5.5859841401586
0.8 5.88321116788833
0.85 6.2192578074219
0.9 6.46502534974648
0.95 6.7959520404796
1 6.99485005149947
};
\addlegendentry{$\lambda$-0.3}
\addplot [semithick, color3, mark=, mark size=1.5, mark options={solid}, line width=1.5pt]
table {%
0 0
0.05 0.519144808551903
0.1 1.03273967260327
0.15 1.53666463335366
0.2 2.04946950530496
0.25 2.54570454295456
0.3 3.08748912510875
0.35 3.52315476845232
0.4 3.9979400205998
0.45 4.47602523974749
0.5 4.99348006519934
0.55 5.43416565834349
0.6 5.87017129828703
0.65 6.35068649313507
0.7 6.7529324706753
0.75 7.22242777572224
0.8 7.59921400785991
0.85 8.00172998270013
0.9 8.41664583354168
0.95 8.78236217637809
1 9.05407945920544
};
\addlegendentry{$\lambda$-0.4}
\addplot [semithick, color4, mark=, mark size=1.5, mark options={solid}, line width=1.5pt]
table {%
0 0
0.05 0.605763942360587
0.1 1.21992780072201
0.15 1.82778172218277
0.2 2.43124568754308
0.25 3.0509694903051
0.3 3.61721382786172
0.35 4.22791772082286
0.4 4.79081209187915
0.45 5.39923600763999
0.5 5.98094019059812
0.55 6.56176438235617
0.6 7.08670913290864
0.65 7.63142368576314
0.7 8.16253837461625
0.75 8.67302326976725
0.8 9.2567274327257
0.85 9.69951300487001
0.9 10.2050779492206
0.95 10.6707832921671
1 11.1075489245108
};
\addlegendentry{$\lambda$-0.5}
\addplot [semithick, color0, dashed, mark=, mark size=1.5, mark options={solid}, line width=1.5pt]
table {%
0 0
0.05 0.163905149446792
0.1 0.321095300596137
0.15 0.471571077848174
0.2 0.615333181505945
0.25 0.752382388239256
0.3 0.882719551543047
0.35 1.00634560219026
0.4 1.12326154867928
0.45 1.23346847767584
0.5 1.3369675544495
0.55 1.43376002330471
0.6 1.52384720800635
0.65 1.60723051219992
0.7 1.68391141982631
0.75 1.75389149553114
0.8 1.81717238506874
0.85 1.8737558157008
0.9 1.9236435965896
0.95 1.96683761918603
1 2.00333985761221
};
\addplot [semithick, color1, dashed, mark=, mark size=1.5, mark options={solid}, line width=1.5pt]
table {%
0 0
0.05 0.328310104903977
0.1 0.643232523485115
0.15 0.944760341032024
0.2 1.23288707727836
0.25 1.50760669682536
0.3 1.76891361946629
0.35 2.01680273040249
0.4 2.25126939034139
0.45 2.47230944546801
0.5 2.6799192372825
0.55 2.87409561229666
0.6 3.05483593158376
0.65 3.22213808017609
0.7 3.37600047630566
0.75 3.51642208048413
0.8 3.64340240441823
0.85 3.75694151975786
0.9 3.85704006667428
0.95 3.94369926226636
1 4.01692090879304
};
\addplot [semithick, color2, dashed, mark=, mark size=1.5, mark options={solid}, line width=1.5pt]
table {%
0 0
0.05 0.49132967252517
0.1 0.962828359491984
0.15 1.4144671103228
0.2 1.84621780581369
0.25 2.258053195584
0.3 2.64994693978899
0.35 3.02187365431831
0.4 3.37380895880259
0.45 3.70572952683637
0.5 4.01761313790134
0.55 4.30943873053947
0.6 4.58118645638343
0.65 4.83283773470223
0.7 5.06437530716473
0.75 5.27578329256235
0.8 5.46704724126728
0.85 5.63815418923235
0.9 5.78909271136604
0.95 5.91985297413917
1 6.03042678730144
};
\addplot [semithick, color3, dashed, mark=, mark size=1.5, mark options={solid}, line width=1.5pt]
table {%
0 0
0.05 0.651745783585291
0.1 1.2775418789457
0.15 1.87731739393024
0.2 2.45100481846442
0.25 2.99853942601123
0.3 3.51985883826494
0.35 4.01490272089565
0.4 4.48361258468616
0.45 4.92593167146515
0.5 5.34180490820505
0.55 5.73117891578218
0.6 6.09400206138573
0.65 6.43022454555654
0.7 6.73979851644388
0.75 7.02267820517374
0.8 7.27882007728575
0.85 7.5081829960688
0.9 7.7107283943445
0.95 7.88642045184207
1 8.03522627580268
};
\addplot [semithick, color4, dashed, mark=, mark size=1.5, mark options={solid}, line width=1.5pt]
table {%
0 0
0.05 0.808016585993264
0.1 1.58479915207258
0.15 2.33002994385452
0.2 3.04346070719152
0.25 3.72489171102647
0.3 4.37415732117921
0.35 4.99111594235457
0.4 5.57564294623864
0.45 6.12762568515651
0.5 6.64695998924998
0.55 7.13354773521299
0.6 7.58729519871301
0.65 8.00811198551074
0.7 8.39591039283147
0.75 8.75060509187115
0.8 9.07211305017215
0.85 9.36035363265728
0.9 9.6152488347691
0.95 9.83672361202544
1 10.0247062784553
};
\end{axis}

\end{tikzpicture}}}
\subcaption{Mean number of high rate servers $\bar{M}_h$.}
\label{fig:avgHighRateServersVspWithFixedLambda}
\end{subfigure}%
\begin{subfigure}[t]{0.5\textwidth}
\centerline{\scalebox{0.8}{
\begin{tikzpicture}

\definecolor{color0}{rgb}{0.12156862745098,0.466666666666667,0.705882352941177}
\definecolor{color1}{rgb}{1,0.498039215686275,0.0549019607843137}
\definecolor{color2}{rgb}{0.172549019607843,0.627450980392157,0.172549019607843}
\definecolor{color3}{rgb}{0.83921568627451,0.152941176470588,0.156862745098039}
\definecolor{color4}{rgb}{0.580392156862745,0.403921568627451,0.741176470588235}
\definecolor{color5}{rgb}{0.549019607843137,0.337254901960784,0.294117647058824}
\definecolor{color6}{rgb}{0.890196078431372,0.466666666666667,0.76078431372549}
\definecolor{color7}{rgb}{0.737254901960784,0.741176470588235,0.133333333333333}
\definecolor{color8}{rgb}{0.0901960784313725,0.745098039215686,0.811764705882353}

\begin{axis}[
font=\Large,
legend cell align={left},
legend style={fill opacity=0.1, draw opacity=1, text opacity=1, draw=white!80!black},
tick align=outside,
tick pos=left,
x grid style={white!69.01960784313725!black, densely dotted},
xlabel={high rate selection probability $p$},
x label style={at={(axis description cs:0.5,-0.02)},anchor=north},
xmajorgrids,
xmin=-0.05, xmax=1.05,
xtick style={color=black},
xtick={-0.2,0,0.2,0.4,0.6,0.8,1,1.2},
xticklabels={−0.2,0.0,0.2,0.4,0.6,0.8,1.0,1.2},
y grid style={white!69.01960784313725!black, densely dotted},
ylabel={Mean number of queries $\bar{R}$},
ymajorgrids,
ymin=-0.0503470764311955, ymax=6.5,
ytick style={color=black}
]
\addplot [semithick, color0, mark=, mark size=1.5, mark options={solid}, line width=1.5pt]
table {%
0 0.430285697143028
0.05 0.423995760042398
0.1 0.411005889941101
0.15 0.406975930240697
0.2 0.394066059339408
0.25 0.384376156238439
0.3 0.376816231837683
0.35 0.363626363736366
0.4 0.363156368436312
0.45 0.349816501834979
0.5 0.340946590534096
0.55 0.328256717432826
0.6 0.322026779732204
0.65 0.312806871931278
0.7 0.308616913830861
0.75 0.294587054129458
0.8 0.285307146928526
0.85 0.270217297827021
0.9 0.261497385026145
0.95 0.256447435525644
1 0.248597514024861
};
\addlegendentry{$\lambda$-0.1}
\addplot [semithick, color1, mark=, mark size=1.5, mark options={solid}, line width=1.5pt]
table {%
0 0.976960230397696
0.05 0.949030509694902
0.1 0.929640703592964
0.15 0.909790902090979
0.2 0.886171138288617
0.25 0.861551384486155
0.3 0.841451585484148
0.35 0.810371896281035
0.4 0.789132108678911
0.45 0.780082199178015
0.5 0.749362506374945
0.55 0.723192768072319
0.6 0.699963000369995
0.65 0.674283257167425
0.7 0.6519934800652
0.75 0.636273637263628
0.8 0.612283877161232
0.85 0.582914170858296
0.9 0.577224227757739
0.95 0.546864531354687
1 0.519504804951949
};
\addlegendentry{$\lambda$-0.2}
\addplot [semithick, color2, mark=, mark size=1.5, mark options={solid}, line width=1.5pt]
table {%
0 1.74835251647483
0.05 1.69052309476905
0.1 1.65178348216518
0.15 1.60811391886081
0.2 1.53351466485335
0.25 1.49914500854991
0.3 1.44777552224478
0.35 1.40131598684013
0.4 1.36281637183627
0.45 1.32711672883271
0.5 1.24285757142429
0.55 1.23395766042339
0.6 1.16965830341696
0.65 1.12532874671252
0.7 1.09017909820901
0.75 1.0440695593044
0.8 1.00549994500056
0.85 0.97180028199718
0.9 0.923730762692371
0.95 0.896451035489646
1 0.851091489085111
};
\addlegendentry{$\lambda$-0.3}

\addplot [semithick, color3, mark=, mark size=1.5, mark options={solid}, line width=1.5pt]
table {%
0 2.97968020319793
0.05 2.88303116968829
0.1 2.74669253307465
0.15 2.61793382066179
0.2 2.5359246407536
0.25 2.4570654293457
0.3 2.36921630783688
0.35 2.20622793772068
0.4 2.12315876841233
0.45 2.02405975940238
0.5 1.97488025119748
0.55 1.87315126848731
0.6 1.78655213447865
0.65 1.71761282387176
0.7 1.62982370176298
0.75 1.58576414235858
0.8 1.49221507784921
0.85 1.43034569654302
0.9 1.38045619543804
0.95 1.31779682203178
1 1.24311756882433
};
\addlegendentry{$\lambda$-0.4}

\addplot [semithick, color4, mark=, mark size=1.5, mark options={solid}, line width=1.5pt]
table {%
0 5.85468145318551
0.05 5.36577634223652
0.1 4.98396016039836
0.15 4.72665273347269
0.2 4.35076649233508
0.25 4.1293387066129
0.3 3.83889161108388
0.35 3.60475395246047
0.4 3.35568644313556
0.45 3.179988200118
0.5 3.05883941160591
0.55 2.86510134898652
0.6 2.68364316356838
0.65 2.55562444375556
0.7 2.41019589804102
0.75 2.26955730442696
0.8 2.17142828571717
0.85 2.04886951130486
0.9 1.95473045269547
0.95 1.85270147298526
1 1.74280257197427
};
\addlegendentry{$\lambda$-0.5}

\addplot [semithick, color0, dashed, mark=, mark size=1.5, mark options={solid}, line width=1.5pt]
table {%
0 0.444187036177204
0.05 0.43404340963423
0.1 0.42395898433874
0.15 0.413933198437679
0.2 0.403965497540997
0.25 0.394055334594082
0.3 0.384202169752842
0.35 0.374405470261397
0.4 0.364664710332299
0.45 0.354979371029231
0.5 0.345348940152115
0.55 0.335772912124579
0.6 0.326250787883713
0.65 0.316782074772078
0.7 0.307366286431896
0.75 0.298002942701375
0.8 0.288691569513129
0.85 0.27943169879462
0.9 0.270222868370593
0.95 0.26106462186746
1 0.251956508619564
};
\addplot [semithick, color1, dashed, mark=, mark size=1.5, mark options={solid}, line width=1.5pt]
table {%
0 1.04195211606285
0.05 1.01398633268276
0.1 0.986418590880145
0.15 0.959239649319581
0.2 0.932440565974855
0.25 0.906012685714698
0.3 0.879947628514465
0.35 0.85423727825667
0.4 0.828873772085846
0.45 0.803849490285498
0.5 0.779157046647064
0.55 0.75478927930281
0.6 0.730739241996385
0.65 0.707000195766501
0.7 0.683565601020758
0.75 0.660429109978113
0.8 0.637584559459838
0.85 0.615025964010093
0.9 0.59274750932839
0.95 0.570743545997341
1 0.549008583490075
};
\addplot [semithick, color2, dashed, mark=, mark size=1.5, mark options={solid}, line width=1.5pt]
table {%
0 1.90559975785274
0.05 1.84266085474011
0.1 1.78145490587246
0.15 1.72190403806516
0.2 1.66393522767176
0.25 1.60747991688394
0.3 1.55247366664484
0.35 1.4988558421047
0.4 1.44656932706247
0.45 1.39556026427907
0.5 1.34577781892897
0.55 1.29717396278511
0.6 1.24970327701692
0.65 1.20332277172812
0.7 1.15799172057555
0.75 1.1136715089977
0.8 1.07032549474503
0.85 1.02791887954747
0.9 0.986418590880144
0.95 0.945793172898935
1 0.906012685714698
};
\addplot [semithick, color3, dashed, mark=, mark size=1.5, mark options={solid}, line width=1.5pt]
table {%
0 3.31392688485478
0.05 3.16744416032064
0.1 3.02861367717593
0.15 2.89678703661734
0.2 2.7713909533086
0.25 2.65191635550701
0.3 2.53790936667037
0.35 2.42896379510094
0.4 2.32471484140267
0.45 2.22483379629135
0.5 2.12902354907583
0.55 2.03701476381941
0.6 1.948562608597
0.65 1.86344394542688
0.7 1.78145490587246
0.75 1.70240879109046
0.8 1.62613424607562
0.85 1.55247366664484
0.9 1.48128180478836
0.95 1.41242454375693
1 1.34577781892897
};
\addplot [semithick, color4, dashed, mark=, mark size=1.5, mark options={solid}, line width=1.5pt]
table {%
0 6.2936420287405
0.05 5.83824567518369
0.1 5.43560894878661
0.15 5.07610146923697
0.2 4.7524144484765
0.25 4.45888226354856
0.3 4.19103277784424
0.35 3.94528019151836
0.4 3.7187097474934
0.45 3.50892338636937
0.5 3.31392688485478
0.55 3.1320458689483
0.6 2.96186232941204
0.65 2.80216595603307
0.7 2.65191635550701
0.75 2.51021337933281
0.8 2.37627357500013
0.85 2.24941131633516
0.9 2.12902354907583
0.95 2.01457735812719
1 1.90559975785274
};
\end{axis}

\end{tikzpicture}}}
\subcaption{Mean number of queries $\bar{R}$.}
\label{fig:avgNumCustomersVspWithFixedLambda}
\end{subfigure}%
\caption{Performance as a function of high rate selection probability $p$ for query arrival rate $\lambda \in \set{0.1, 0.2, 0.3, 0.4, 0.5}$. }
\end{figure}
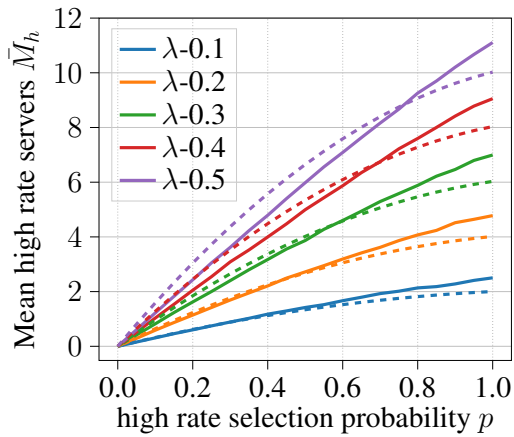
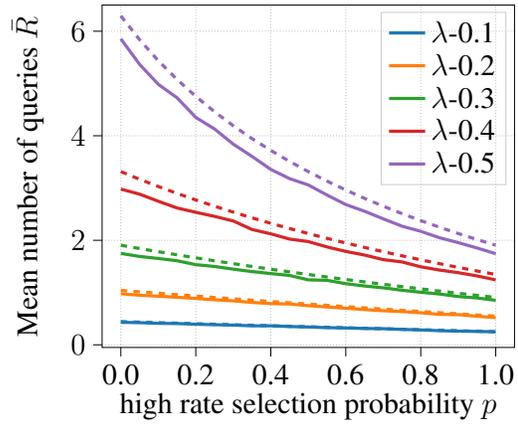
So far, our results were agnostic of the specific form of power consumption as a function of service rate, 
with the only implicit assumption being that the power consumption increases with increase in service rate.  
To study the effect of the system parameters on the power consumption, we need to specify its relation to service rates. 
In practice, the service rate is determined by frequency and voltage, 
and its relationship with power consumption is non-linearly increasing~\cite{MiyoshiCSC2002}. 
We choose a quadratic power consumption function $P(\mu_i) = \alpha \mu_i^2$ with $\alpha = 1$ for a server working at service rate $\mu_i,~i \in \set{0, 1}$.  
\begin{figure}[tbh]
\centerline{\scalebox{0.8}{
\begin{tikzpicture}

\definecolor{color0}{rgb}{0.12156862745098,0.466666666666667,0.705882352941177}
\definecolor{color1}{rgb}{1,0.498039215686275,0.0549019607843137}
\definecolor{color2}{rgb}{0.172549019607843,0.627450980392157,0.172549019607843}
\definecolor{color3}{rgb}{0.83921568627451,0.152941176470588,0.156862745098039}
\definecolor{color4}{rgb}{0.580392156862745,0.403921568627451,0.741176470588235}
\definecolor{color5}{rgb}{0.549019607843137,0.337254901960784,0.294117647058824}
\definecolor{color6}{rgb}{0.890196078431372,0.466666666666667,0.76078431372549}
\definecolor{color7}{rgb}{0.737254901960784,0.741176470588235,0.133333333333333}
\definecolor{color8}{rgb}{0.0901960784313725,0.745098039215686,0.811764705882353}

\begin{axis}[
legend cell align={left},
legend style={fill opacity=0.1, draw opacity=1, text opacity=1, draw=white!80!black, at={(0.98,0.42)}},
tick align=outside,
tick pos=left,
x grid style={white!69.01960784313725!black, densely dotted},
xlabel={Mean number of queries $\bar{R}$},
xmajorgrids,
xmin=-0.0503470764311955, xmax=7,
xtick style={color=black},
y grid style={white!69.01960784313725!black, densely dotted},
ylabel={Mean power $\bar{P}$},
ymajorgrids,
ymin=0.0289944716555837, ymax=9.4,
ytick style={color=black},
]
\addplot [semithick, color0, mark=, mark size=1.5, mark options={solid}, line width=1.5pt]
table {%
0.430285697143028 1.18731625883741
0.423995760042398 1.2418964290357
0.411005889941101 1.29163673563264
0.406975930240697 1.35213956660433
0.394066059339408 1.39205920740793
0.384376156238439 1.43999290807093
0.376816231837683 1.49787071729282
0.363626363736366 1.53746700932987
0.363156368436312 1.60569878301218
0.349816501834979 1.63574105858943
0.340946590534096 1.67592605273947
0.328256717432826 1.69786030939689
0.322026779732204 1.76196857631423
0.312806871931278 1.80944252557474
0.308616913830861 1.864489899101
0.294587054129458 1.87685199548004
0.285307146928526 1.92228832911674
0.270217297827021 1.90637944020559
0.261497385026145 1.93911639683599
0.256447435525644 1.99768334716652
0.248597514024861 2.0278552214478
};
\addlegendentry{$\lambda$-0.1}
\addplot [semithick, color1, mark=, mark size=1.5, mark options={solid}, line width=1.5pt]
table {%
0.976960230397696 2.24962484775152
0.949030509694902 2.35937745422546
0.929640703592964 2.48074178058219
0.909790902090979 2.57965410345896
0.886171138288617 2.66860028599715
0.861551384486155 2.76180734992649
0.841451585484148 2.87034723652759
0.810371896281035 2.95265634543655
0.789132108678911 3.03631392886071
0.780082199178015 3.16339193808061
0.749362506374945 3.22770529494702
0.723192768072319 3.29565195148046
0.699963000369995 3.38035960840394
0.674283257167425 3.45051981080189
0.6519934800652 3.50598805611944
0.636273637263628 3.60708346916531
0.612283877161232 3.6722923530764
0.582914170858296 3.70723119168812
0.577224227757739 3.84444155158444
0.546864531354687 3.84958435215641
0.519504804951949 3.87040859591399
};
\addlegendentry{$\lambda$-0.2}
\addplot [semithick, color2, mark=, mark size=1.5, mark options={solid}, line width=1.5pt]
table {%
1.74835251647483 3.24202011979878
1.69052309476905 3.39136415835839
1.65178348216518 3.54915803241971
1.60811391886081 3.70011622283774
1.53351466485335 3.82519808801915
1.49914500854991 3.96115166448339
1.44777552224478 4.11664840151597
1.40131598684013 4.26510922490772
1.36281637183627 4.4012662713373
1.32711672883271 4.53841345186547
1.24285757142429 4.60462154178458
1.23395766042339 4.80104457755423
1.16965830341696 4.88693385066147
1.12532874671252 5.00324026359736
1.09017909820901 5.14046228737711
1.0440695593044 5.22408358316417
1.00549994500056 5.3139097409026
0.97180028199718 5.44409099109007
0.923730762692371 5.50147276127236
0.896451035489646 5.63607397526024
0.851091489085111 5.66582854171457
};
\addlegendentry{$\lambda$-0.3}
\addplot [semithick, color3, mark=, mark size=1.5, mark options={solid}, line width=1.5pt]
table {%
2.97968020319793 4.13105005749942
2.88303116968829 4.33800622793772
2.74669253307465 4.52089203107968
2.61793382066179 4.69647300327
2.5359246407536 4.9033938580614
2.4570654293457 5.08111774482256
2.36921630783688 5.32608035519642
2.20622793772068 5.41772186678135
2.12315876841233 5.59073548864516
2.02405975940238 5.76127831121678
1.97488025119748 5.97429840501594
1.87315126848731 6.11528235517649
1.78655213447865 6.2602990930091
1.71761282387176 6.45293039469606
1.62982370176298 6.57411555084449
1.58576414235858 6.7639238687613
1.49221507784921 6.87059559004409
1.43034569654302 7.01125943140566
1.38045619543804 7.16700596994031
1.31779682203178 7.28397981620173
1.24311756882433 7.33380436195641
};
\addlegendentry{$\lambda$-0.4}
\addplot [semithick, color4, mark=, mark size=1.5, mark options={solid}, line width=1.5pt]
table {%
5.85468145318551 4.96809698303018
5.36577634223652 5.2068521394787
4.98396016039836 5.45794865651344
4.72665273347269 5.68865735342645
4.35076649233508 5.8972850231497
4.1293387066129 6.13870181298188
3.83889161108388 6.32073615263848
3.60475395246047 6.56381342586572
3.35568644313556 6.7456420275798
3.179988200118 6.9890157578425
3.05883941160591 7.21001751182489
2.86510134898652 7.41965443145567
2.68364316356838 7.59103608163913
2.55562444375556 7.78394828851709
2.41019589804102 7.97474706452935
2.26955730442696 8.14107339326602
2.17142828571717 8.38310674893253
2.04886951130486 8.50830752092484
1.95473045269547 8.69673204867963
1.85270147298526 8.85641611983887
1.74280257197427 8.9971146288538
};
\addlegendentry{$\lambda$-0.5}
\addplot [semithick, color0, dashed, mark=, mark size=1.5, mark options={solid}, line width=1.5pt]
table {%
0.444187036177204 0.975476870322856
0.43404340963423 1.04086326104691
0.42395898433874 1.10276970036738
0.413933198437679 1.16119662209334
0.403965497540997 1.21614449992835
0.394055334594082 1.2676138477055
0.384202169752842 1.31560521961976
0.374405470261397 1.36011921045738
0.364664710332299 1.40115645582254
0.354979371029231 1.43871763236111
0.345348940152115 1.47280345798158
0.335772912124579 1.50341469207327
0.326250787883713 1.53055213572157
0.316782074772078 1.55421663192059
0.307366286431896 1.57440906578275
0.298002942701375 1.59113036474585
0.288691569513129 1.60438149877719
0.27943169879462 1.61416348057501
0.270222868370593 1.62047736576708
0.26106462186746 1.62332425310671
0.251956508619564 1.62270528466589
};
\addplot [semithick, color1, dashed, mark=, mark size=1.5, mark options={solid}, line width=1.5pt]
table {%
1.04195211606285 1.95372896174002
1.01398633268276 2.08490049018221
0.986418590880145 2.20911777865727
0.959239649319581 2.32637786375725
0.932440565974855 2.43667801953296
0.906012685714698 2.54001576281136
0.879947628514465 2.63638885845255
0.85423727825667 2.7257953245417
0.828873772085846 2.80823343751184
0.803849490285498 2.88370173719389
0.779157046647064 2.9521990317904
0.75478927930281 3.01372440277042
0.730739241996385 3.06827720968273
0.707000195766501 3.11585709488536
0.683565601020758 3.15646398818932
0.660429109978113 3.1900981114152
0.637584559459838 3.21675998286086
0.615025964010093 3.2364504216794
0.59274750932839 3.24917055216642
0.570743545997341 3.25492180795559
0.549008583490075 3.25370593612236
};
\addplot [semithick, color2, dashed, mark=, mark size=1.5, mark options={solid}, line width=1.5pt]
table {%
1.90559975785274 2.92320435079757
1.84266085474011 3.12013995240384
1.78145490587246 3.30673771783927
1.72190403806516 3.48298381245887
1.66393522767176 3.64886487141018
1.60747991688394 3.80436802391992
1.55247366664484 3.9494809190615
1.4988558421047 4.08419175268209
1.44656932706247 4.20848929521035
1.39556026427907 4.32236292010194
1.34577781892897 4.42580263271211
1.29717396278511 4.51879909941222
1.24970327701692 4.60134367679151
1.20332277172812 4.67342844080683
1.15799172057555 4.73504621576167
1.1136715089977 4.78619060301256
1.07032549474503 4.82685600931487
1.02791887954747 4.85703767473352
0.986418590880144 4.87673170005475
0.945793172898935 4.88593507364531
0.906012685714698 4.88464569771417
};
\addplot [semithick, color3, dashed, mark=, mark size=1.5, mark options={solid}, line width=1.5pt]
table {%
3.31392688485478 3.87650706573059
3.16744416032064 4.13884642408003
3.02861367717593 4.38758982905113
2.89678703661734 4.62270635081384
2.7713909533086 4.84416592321308
2.65191635550701 5.05193922494373
2.53790936667037 5.24599761255006
2.42896379510094 5.42631309454993
2.32471484140267 5.59285833813751
2.22483379629135 5.74560670159695
2.12902354907583 5.88453228687868
2.03701476381941 6.00961000783617
1.948562608597 6.12081567045582
1.86344394542688 6.21812606208464
1.78145490587246 6.30151904720313
1.70240879109046 6.37097366773362
1.62613424607562 6.42647024623558
1.55247366664484 6.46799049063696
1.48128180478836 6.4955175993958
1.41242454375693 6.50903636618981
1.34577781892897 6.50853328340017
};
\addplot [semithick, color4, dashed, mark=, mark size=1.5, mark options={solid}, line width=1.5pt]
table {%
6.2936420287405 4.80267850348873
5.83824567518369 5.13122852769162
5.43560894878661 5.44283420787808
5.07610146923697 5.73746573374737
4.7524144484765 6.01509574169332
4.45888226354856 6.27569755473739
4.19103277784424 6.51924407148549
3.94528019151836 6.74570710106115
3.7187097474934 6.95505701113808
3.50892338636937 7.14726259916656
3.31392688485478 7.32229112415778
3.1320458689483 7.48010845427642
2.96186232941204 7.62067929758735
2.80216595603307 7.74396749171173
2.65191635550701 7.84993633414334
2.51021337933281 7.93854893934551
2.37627357500013 8.00976861199699
2.24941131633516 8.0635592282082
2.12902354907583 8.09988561840949
2.01457735812719 8.11871394707042
1.90559975785274 8.12001208554881
};
\end{axis}
\end{tikzpicture}}}
\label{fig:TradeoffWithFixedLambda}
\caption{Tradeoff between the mean power consumption $\bar{P}$ and the mean number of  queries $\bar{R}$ varying the probability of high rate selection $p$ for fixed arrival rate $\lambda \in \set{0.1, \dots, 0.5}$. 
}
\label{fig:Tradeoff}
\end{figure}
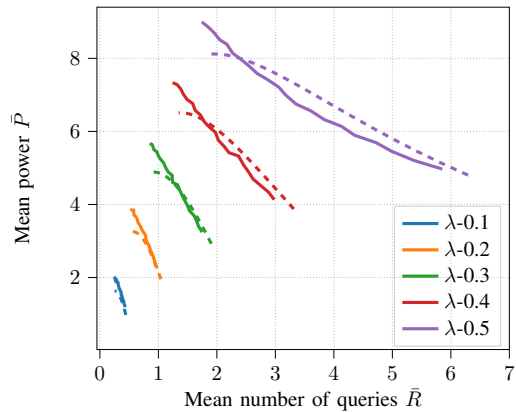
For the quadratic power consumption function, 
we plot the tradeoff curves for mean power consumption with respect to mean number of queries in Fig.~\ref{fig:Tradeoff}. 
Each tradeoff curve is plotted by varying the high rate selection probability $p$ for a fixed arrival rate $\lambda \in \set{0.1, 0.2, 0.3, 0.4, 0.5}$. 
From the tradeoff curve for a fixed arrival rate $\lambda$, 
we observe that probability of high rate selection $p$ can be chosen to minimize the mean power consumption such that the mean sojourn time of queries do not exceed a certain threshold. 

\section{Conclusion \& Future Directions}
\label{sec:conclusion}
We study a probabilistic server slowdown policy to reduce energy conservation in OLDI applications while meeting a strict delay constraint. 
In such applications, each query is sent to a large number of servers owing to the large size of the data and distributed nature of data storage. 
We have analytically modeled the system evolution of query response systems with independent server slowdowns, 
as a Markov process.
This Markov process remains analytically intractable even in the simple case of single rate servers,  
and hence we provided a theoretically motivated approximation that can be used to compute closed from results for average energy consumption and mean number of queries in the system. 
Under this approximation, we can characterize the energy performance tradeoff and provide guidelines for system parameter selection. 
We demonstrate that this approximation remains tight for all range of underlying parameter values. 

There are multiple interesting future research directions. 
One direction is finding better approximations for multiple rate servers. 
Additionally, there can be arrival of multiple class of queries in the system with varying service requirements where each of these class of queries can be served by different subset of servers. 
For ease of analysis, we have restricted our current  discussions to a system of $n$ servers where all arrivals belong to a single class, which can be serviced by all the available servers.
That is, we have limited the notion of system to the subset of servers which are relevant to the particular class of query that we are interested in. 
Analytical modelling of a more general system incorporating multi-class queries is an interesting direction for future research.

We have chosen a model of server slowdown where each server is slowed down with an \emph{a priori} fixed probability independent of the system state. 
In contrast, an ideal load aware probabilistic rate allocation policy is expected to adjust the probability of rate selection depending on the instantaneous loads. 
Our proposed rate allocation policy is simpler in the sense that it 
depends only on the average load in the system.
Probabilistic rate allocation is also a practical choice when the total power consumption is influenced by external factors. 
For instance, if the energy supply fluctuates with respect to time, then the energy expenditure by the data centre can be adapted in order to match the supply curve. 
This can be realized by adapting the rate selection probability at the processor core with respect to the predetermined power supply characteristics.  
Moreover, probabilistic rate control is distributed in nature and can be easily implemented in a practical system with minimum overhead. 
Another interesting extension would be to find a centralized and load aware slowdown for the query response system, where all servers can be simultaneously sped up or slowed down depending on the instantaneous load in the system. 
This extension would also require another set of approximations for performance analysis and system design guidelines.

\bibliographystyle{IEEEtran}
\bibliography{IEEEabrv,performance2021}

\end{document}